\documentclass[aps,superscriptaddress,showkeys]{revtex4}
\usepackage{braket}
\usepackage{amsmath}
\usepackage{amssymb}
\usepackage{amsthm}
\usepackage{graphicx}
\usepackage[colorlinks,
            linkcolor=blue,
            anchorcolor=blue,
            citecolor=blue]{hyperref}
\usepackage{appendix}

\makeatletter 
\newcommand{\rmnum}[1]{\romannumeral #1}
\newcommand{\Rmnum}[1]{\expandafter\@slowromancap\romannumeral #1@}
\makeatother
\theoremstyle{definition}
\newtheorem{theorem}{Theorem}
\newtheorem{definition}[theorem]{Definition}
\newtheorem{lemma}[theorem]{Lemma}
\newtheorem{proposition}[theorem]{Proposition}
\newtheorem{corollary}[theorem]{Corollary}
\newtheorem{example}[theorem]{Example}

\begin{document}

\title{Coherence and imaginarity of qubit states under vanishing Bargmann invariants}

\author{Hao-Fan Wang}
\email{haofanwang@buaa.edu.cn}
\affiliation{LMIB(Beihang University), Ministry of Education, and School of Mathematical Sciences, Beihang University, Beijing 100191, China}
\author{Lin Chen}
\email{linchen@buaa.edu.cn(corresponding author)}
\affiliation{LMIB(Beihang University), Ministry of Education, and School of Mathematical Sciences, Beihang University, Beijing 100191, China}

\begin{abstract}
Set coherence and set imaginarity are two important quantum resources defined by sets of states. We investigate tuples of qubit states under vanishing higher-order Bargmann invariants and show that their set coherence and set imaginarity are each related to whether they contain a pair of orthogonal states. Specifically, when the relevant higher-order Bargmann invariants vanish, the tuple contains a pair of orthogonal states if and only if removing some state makes the remaining tuple incoherent. Moreover, when maximally mixed states are excluded, the same vanishing conditions imply that the tuple is imaginarity-free if and only if it contains a pair of orthogonal states. As a consequence, under these conditions, a tuple of single-qubit states has imaginarity precisely when every tuple obtained by deleting one of its states is coherent.
\end{abstract}

\keywords{Bargmann invariants, set coherence, set imaginarity, qubit states, orthogonal states}

\maketitle

\section{Introduction}
Bargmann invariants \cite{Bargmann0,Bargmann1}, also called multivariate traces of density operators, are recognized as fundamental local unitary invariants in quantum information theory, and they can be estimated directly with quantum circuits \cite{Oszmaniec,Quek,Simonov}. These multivariate traces of quantum states have various applications and links to many concepts, including multiphoton indistinguishability \cite{Menssen,Jones1,Jones2}, quantum error mitigation \cite{JMLiang}, weak values \cite{weakvalue}, Kirkwood-Dirac quasiprobability \cite{KDqp1,KDqp2}, quantum communication \cite{Elliott}, entanglement detection \cite{LinZhang}, set coherence \cite{setcoherence,Wagner1,YLWang}, and set imaginarity \cite{Fernandes,MSLi,YGuo}. This naturally motivates the investigation of the sets of values that Bargmann invariants can take. In a series of works, these sets have been completely characterized \cite{Bargset1,Bargset2,Bargset3,Bargset4}. Recently, the concepts of Bargmann scenarios and Bargmann polytopes have been introduced in Ref.\cite{Wagner1}, thereby providing a unified framework in which these sets appear as special cases.

Coherence and imaginarity are important features of quantum states, and they have been recognized as two crucial physical resources \cite{CoResource,ImResource,ResoTheory} in quantum information. Furthermore, precise operational interpretations have been shown for some single-state coherence measures \cite{omco} and for the single-state resource theory of imaginarity \cite{omim1,omim2}. However, the single-state resource theory overlooks many phenomena that call for a concrete understanding of the collective properties of quantum states, such as the collective behavior of a set of single photons entering a linear-optical multiport interferometer \cite{Shchesnovich1,Shchesnovich2,Jones1,Jones2} and the non-Markovianity of quantum channels \cite{nMarkov}. Motivated by this, quantum resources defined by sets of states have been increasingly studied. To extend single-state coherence, the notion of set coherence was introduced in Ref.\cite{setcoherence} and experimentally verified in Ref.\cite{setcoherence2}. In parallel, the notion of set imaginarity was introduced in Ref.\cite{setimaginarity}. Therefore, a natural question is how to determine set coherence and set imaginarity. Indeed, Ref.\cite{MSLi} has shown that the set imaginarity of qubit states can be fully characterized by second-order Bargmann invariants, and also by the imaginary parts of third-order Bargmann invariants. In addition, Ref.\cite{YLWang} has shown that the set coherence of states can be fully characterized by a fourth-order Bargmann scenario. Although these studies indicate that Bargmann invariants provide useful tools for determining set coherence and set imaginarity, the connection between set coherence and set imaginarity remains unclear.\par

In this work, we investigate how vanishing higher-order Bargmann invariants constrain set coherence and set imaginarity of an $n$-tuple of single-qubit states, where $n\geq 4$. In fact, we find that orthogonality plays a crucial role. Our main results are as follows. First, we establish three theorems connecting set coherence to orthogonality. Proposition \ref{pp:rho1rho2rho3rho4} deals with the case $n=4$: if the three fourth-order Bargmann invariants $\operatorname{Tr}(\rho_{1}\rho_{2}\rho_{3}\rho_{4})$, $\operatorname{Tr}(\rho_{1}\rho_{3}\rho_{4}\rho_{2})$ and $\operatorname{Tr}(\rho_{1}\rho_{4}\rho_{2}\rho_{3})$ vanish, then the tuple $(\rho_{1},\rho_{2},\rho_{3},\rho_{4})$ contains a pair of orthogonal states if and only if there exists a state whose removal makes the remaining tuple incoherent. Proposition \ref{thm_5rho} extends this equivalence to $n=5$, where the vanishing conditions are formulated through the Bargmann scenario associated with the alternating group $A_5$. Theorem \ref{thm_nrho} further extends it to all $n \geq 6$, with the vanishing conditions given by the Bargmann scenario associated with the symmetric group $S_n$. Second, we establish two theorems connecting set imaginarity to orthogonality. Proposition \ref{4imaginarity} treats the case $n=4$: if the tuple $(\rho_{1},\rho_{2},\rho_{3},\rho_{4})$ contains no maximally mixed state and satisfies the same vanishing conditions as in Proposition \ref{pp:rho1rho2rho3rho4}, then the tuple $(\rho_{1},\rho_{2},\rho_{3},\rho_{4})$ is imaginarity-free if and only if it contains a pair of orthogonal states. Theorem \ref{nimaginarity} generalizes this equivalence to all $n\geq 4$, with the vanishing conditions given by the Bargmann scenario associated with the symmetric group $S_n$. These results indicate that, under the conditions above, an $n$-tuple of single-qubit states has imaginarity if and only if every tuple obtained by removing one of its states is coherent.\par

The remainder of this paper is structured as follows: In Sec. \ref{sec:pre}, we clarify the fundamental concepts and main knowledge used in this work. In Sec. \ref{sec3}, we characterize the connection between set coherence and orthogonality under vanishing Bargmann invariants. In Sec. \ref{sec4}, we characterize the connection between set imaginarity and orthogonality under vanishing Bargmann invariants, and establish the connection between set coherence and set imaginarity via Corollary \ref{maincor}. Finally, in Sec. \ref{sec5}, we summarize our results and provide discussion. \par

\section{Preliminaries}
\label{sec:pre}
In this section, we introduce the main facts and tools used in this paper. We recall the definition and some properties of Bargmann scenarios. We regard $\{1,\cdots, n\}$ as an alphabet whose elements are called letters, and let finite sequences of letters be called words. Let $w=(l_{1},\cdots,l_{m})$ be a word over the alphabet $\{1,\cdots,n\}$. Here the letters $l_{1},\cdots,l_{m}$ need not be different; that is, repetitions are allowed. Let $\vec{\rho}=(\rho_{1},\cdots,\rho_{n})$ be an $n$-tuple of density operators on $\mathbb{C}^{d}$. The generalized Bargmann invariant associated with $\vec{\rho}$ and $w$ is defined by 
\begin{equation}\label{eq:Delta w}
    \Delta_{w}(\vec{\rho}):=\operatorname{Tr}(\rho_{l_{1}}\cdots\rho_{l_{m}}).
\end{equation}\par
We say that two words are cyclically equivalent if one can be obtained from the other by applying a finite number of cyclic permutations. Clearly, for a fixed $n$-tuple $\vec{\rho}$, cyclically equivalent words give the same Bargmann invariant. \par
    
\begin{definition}\label{de:Bargmann scenario}
    \cite{Wagner1,YLWang} A {\bf Bargmann scenario} is a finite set $\mathcal{W}$ of words with its defining value map
    \begin{equation}
        \label{eq:Bargmann scenario}\Phi_{\mathcal{W}}(\vec{\rho}):= (\Delta_{w}(\vec{\rho}))_{w\in\mathcal{W}}.
    \end{equation}
\end{definition}

For any two density operators on qubit system, a fundamental property can be stated as follows.

\begin{lemma}\label{pp:rho sigma}
    Let $\rho,\sigma$ be two density operators on $\mathbb{C}^{2}$. Suppose $\mathcal{W}$ is the Bargmann scenario consisting of all words over the alphabet $\{1,2\}$. Then $\Delta_{w}(\rho,\sigma)\geq 0$ for all $w\in\mathcal{W}$. Moreover, $\rho\sigma=0$ if and only if there exists some $w\in\mathcal{W}$ such that $\Delta_{w}(\rho,\sigma)=0$.    
\end{lemma}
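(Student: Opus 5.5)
Any word $w$ over $\{1,2\}$ is cyclically equivalent either to $\rho^{k}$, to $\sigma^{k}$, or to a word of the form $\rho^{a_{1}}\sigma^{b_{1}}\cdots\rho^{a_{r}}\sigma^{b_{r}}$ with all $a_{i},b_{i}\geq 1$. For the first two types the invariant is $\operatorname{Tr}(\rho^{k})>0$ or $\operatorname{Tr}(\sigma^{k})>0$ (a density operator has a positive eigenvalue), so only the mixed words need work. My plan is to reduce these to products of positive semidefinite $2\times 2$ matrices with a rank-one structure. The point is that on a qubit every such trace factorizes into nonnegative scalars.

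\textbf{Full-rank case.} Suppose $\rho$ has full rank. Write $A_{i}=\rho^{a_{i}}$ and $B_{i}=\sigma^{b_{i}}$. These are positive semidefinite, and they are diagonal in the eigenbases of $\rho$ and $\sigma$ respectively. The plan is to expand each power in spectral form, $\rho^{a}=\sum_{j}\lambda_{j}^{a}\ket{e_{j}}\bra{e_{j}}$ and $\sigma^{b}=\sum_{k}\mu_{k}^{b}\ket{f_{k}}\bra{f_{k}}$. The trace then becomes a sum over index sequences of terms of the form
\[
\prod\lambda\prod\mu\cdot\braket{e_{j_1}|f_{k_1}}\braket{f_{k_1}|e_{j_2}}\cdots\braket{f_{k_r}|e_{j_1}}.
\]
This expansion alone does not give positivity, because the inner-product products are not obviously nonnegative. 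This is the main obstacle.

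\textbf{Resolving the obstacle.} The fix is a qubit-specific argument: choose a basis in which both $\rho$ and $\sigma$ are real symmetric. Using the Bloch representation, rotate so that both Bloch vectors lie in the $xz$-plane; this is always possible for two vectors. Then all the $\ket{e_{j}}$ and $\ket{f_{k}}$ are real, so the inner-product products are real, but individual terms can still be negative.

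A cleaner route is to use that any positive power of a qubit state is again a positive multiple of a qubit state: $\rho^{a}=c\,\tilde\rho$ with $c>0$. The mixed word then becomes a positive multiple of $\operatorname{Tr}(\tau_{1}\tau_{2}\cdots\tau_{2r})$, where the $\tau_{i}$ alternate between states whose Bloch vectors lie on the rays through $\vec{r}_{\rho}$ and through $\vec{r}_{\sigma}$. I would then prove positivity by induction on $r$. The key observation is that a product $\tau\tau'$ of two qubit operators of the form $\alpha I+\beta\,\hat n\cdot\vec\sigma$ with coplanar real $\hat n$ stays in the algebra spanned by $I$, $\hat n_{1}\cdot\vec\sigma$, $\hat n_{2}\cdot\vec\sigma$ and $i\sigma_{y}$. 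I would track the coefficients and show that the trace is a sum of squares-type expression.

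If this becomes messy, fall back on the direct formula for alternating products. For two Hermitian $2\times 2$ matrices, $\operatorname{Tr}((AB)^{r})$ and the general alternating trace are controlled by the eigenvalues of $AB$. The product $AB=A^{1/2}(A^{1/2}BA^{1/2})A^{-1/2}$ is similar to a positive semidefinite matrix, so it has nonnegative eigenvalues. For a general word $\prod A_{i}B_{i}$, however, the factors differ, so I would instead argue that in the real coplanar picture every $\rho^{a}$ and $\sigma^{b}$ lies in the commutative-up-to-sign structure, and verify $\operatorname{Tr}\ge0$ by an explicit $2\times2$ computation.

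\textbf{Orthogonality characterization.} For the "moreover" part:
\begin{itemize}
\item \emph{If $\rho\sigma=0$.} Every mixed word contains an adjacent $\rho\sigma$ or $\sigma\rho$ cyclically, so its trace vanishes.
\item \emph{Conversely, if some trace vanishes.} By the first part that word must be mixed. I would show that vanishing forces $\operatorname{Tr}(\rho\sigma)=0$ by comparing with the explicit expression. Then $\operatorname{Tr}(\rho\sigma)=\operatorname{Tr}(\rho^{1/2}\sigma\rho^{1/2})=0$ together with positivity gives $\sigma^{1/2}\rho^{1/2}=0$, hence $\rho\sigma=0$.
\end{itemize}
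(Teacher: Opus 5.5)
\textbf{There is a genuine gap: the nonnegativity $\Delta_w(\rho,\sigma)\ge 0$, on which everything else rests, is never proved.} Each route you list stops at the decisive step.
\begin{itemize}
\item You note yourself that the spectral expansion and the real $xz$-plane basis leave individual terms of indefinite sign.
\item The ``cleaner route'' ends with ``show that the trace is a sum of squares-type expression'' without producing one.
\item The similarity of $AB$ to $A^{1/2}BA^{1/2}$ only handles $\operatorname{Tr}((AB)^r)$, as you concede.
\end{itemize}
Realness by itself cannot supply the sign, since $I,\sigma_x,\sigma_z,i\sigma_y$ already span all real $2\times 2$ matrices. So the ``commutative-up-to-sign'' idea has nothing to work with.

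The missing idea is to use a basis adapted to $\rho$. Take $\rho=\mathrm{diag}(p,1-p)$ and $\sigma=\begin{pmatrix} q&\beta\\ \bar\beta&1-q\end{pmatrix}$. Expand $\operatorname{Tr}(\rho_{l_1}\cdots\rho_{l_m})$, with $\rho_{l_k}\in\{\rho,\sigma\}$, letter by letter as a sum over closed index paths $j_1\to j_2\to\cdots\to j_m\to j_1$ in $\{0,1\}$. No grouping into powers and no induction is needed. Off-diagonal entries can only come from $\sigma$, and a closed path has equally many $0\to1$ and $1\to0$ steps. Hence every term equals $|\beta|^{2k}$ times a product of nonnegative diagonal entries. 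Equivalently, after a diagonal phase change making $\beta\ge 0$, both $\rho$ and $\sigma$ are entrywise nonnegative, so every product of them is too. In your Bloch language this is one step beyond what you did: put the Bloch vector of $\rho$ on the $z$-axis and that of $\sigma$ in the $xz$-plane with nonnegative $x$-component. This is the paper's argument.

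\textbf{The converse in the ``moreover'' part has the same hole.} The ``explicit expression'' you would compare with is never derived. Nonnegativity alone does not show that the vanishing of some long mixed word forces $\operatorname{Tr}(\rho\sigma)=0$; you need a strict lower bound, which the termwise expansion gives at once. If $\rho\sigma\ne 0$, there is an index $i$ with $\rho_{ii}>0$ and $\sigma_{ii}>0$:
\begin{itemize}
\item if $0<q<1$, both diagonal entries of $\sigma$ are positive and at least one of $p,1-p$ is;
\item if $q\in\{0,1\}$, then $\beta=0$, $\sigma$ is a diagonal rank-one projector, and $\rho\sigma\ne 0$ forces the matching diagonal entry of $\rho$ to be positive.
\end{itemize}
The constant path $j_1=\cdots=j_m=i$ then contributes a strictly positive term, so $\Delta_w(\rho,\sigma)>0$ for every $w$.

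Your forward direction is correct. Your step $\operatorname{Tr}(\rho\sigma)=0\Rightarrow\rho\sigma=0$ is valid, but it becomes unnecessary with this contrapositive argument.
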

\begin{proof}
    See the proof of Lemma \ref{pp:rho sigma} in Appendix \ref{Appendix pp:rho sigma}.
\end{proof}

The first conclusion of Lemma \ref{pp:rho sigma} cannot be generalized to higher dimensions. For example, take $\rho=\begin{pmatrix} \frac{1}{2} & 0 & 0\\ 0 & \frac{1}{3} & 0\\ 0 & 0 & \frac{1}{6} \end{pmatrix}, \sigma=\begin{pmatrix} \frac{1}{3} & \frac{1}{6} & \frac{1}{6}\\ \frac{1}{6} & \frac{1}{3} & \frac{\mathrm{i}}{6}\\ \frac{1}{6} & -\frac{\mathrm{i}}{6} & \frac{1}{3} \end{pmatrix}$. Then $\Delta_{112212}(\rho,\sigma)=\operatorname{Tr}(\rho^{2}\sigma^{2}\rho\sigma)=\frac{13}{972}-\frac{\mathrm{i}}{23328}\notin\mathbb{R}$. In fact, Johnson and Hillar \cite{Johnson} conjectured that, for any word $w$ over the alphabet $\{1,2\}$, the inequality $\Delta_{w}(A,B)\geq 0$ holds for every pair of positive semidefinite operators $A,B$ if and only if $w$ is nearly symmetric. 

Next, we recall the definition and some properties of set imaginarity.

\begin{definition}\label{de:imaginary-free}
    \cite{setimaginarity,MSLi}
    Let $\vec{\rho}=(\rho_{1},\cdots,\rho_{n})$ be an $n$-tuple of density operators where $\rho_{i}\in \mathcal{D}(\mathbb{C}^{d})$. We say that $\vec{\rho}$ is {\bf imaginarity-free} if there exists a unitary operator $U$ acting on $\mathbb{C}^{d}$ such that $U\rho_{k}U^{\dagger}$ is real with respect to some basis $\{\ket{j}\}_{j=1}^{d}$ for all $k\in\{1,2,\cdots,n\}$. Otherwise, we say that $\vec{\rho}$ has imaginarity.
\end{definition}

Given an $n$-tuple of single-qubit states $\vec{\rho}=(\rho_{1},\cdots,\rho_{n})\in\mathcal{D}(\mathbb{C}^{2})^{n}$, the Gram matrix associated to the related tuple of Bloch vectors $\boldsymbol{r}_{\vec{\rho}}=(\boldsymbol{r}_{1},\cdots,\boldsymbol{r}_{n})$ is defined by $G_{\boldsymbol{r}_{\vec{\rho}}}=(\braket{\boldsymbol{r}_{k},\boldsymbol{r}_{l}})_{n\times n}$, where $\boldsymbol{r}_{k}$ is the Bloch vector of the state $\rho_{k}$. If we denote $B=(\boldsymbol{r}_{1},\cdots,\boldsymbol{r}_{n})$, then we can see that $\operatorname{rank}(G_{\boldsymbol{r}_{\vec{\rho}}})=\operatorname{rank}(B^{\mathsf{T}}B)=\operatorname{rank}(B)\leq 3$. The necessary and sufficient condition for single-qubit multistates $\vec{\rho}$ to have imaginarity can be stated as follows.

\begin{lemma}\label{th:imaginarity}
    (Theorem III.3. in Ref.\cite{MSLi}) Let $\vec{\rho}=(\rho_{1},\cdots,\rho_{n})$ be an $n$-tuple of single-qubit states where $\rho_{i}\in \mathcal{D}(\mathbb{C}^{2})$. Then $\vec{\rho}$ has imaginarity if and only if ${\rm rank}(G_{\boldsymbol{r}_{\vec{\rho}}})=3$.
\end{lemma}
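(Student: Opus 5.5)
The plan is to reduce the statement to a purely geometric fact about Bloch vectors. Two ingredients do the work. First, conjugation by a unitary acts on qubit Bloch vectors as a proper rotation. Second, being real in a basis forces a single Bloch component to vanish.

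First I would remove the freedom of the basis in Definition \ref{de:imaginary-free}. Suppose $U\rho_kU^\dagger$ is real with respect to a basis $\{\ket{j}\}$. Let $V$ be the unitary sending $\{\ket{j}\}$ to the computational basis. Then $VU\rho_kU^\dagger V^\dagger$ is real in the computational basis. Hence $\vec\rho$ is imaginarity-free if and only if some single unitary $W$ makes every $W\rho_kW^\dagger$ a real matrix in the computational basis.

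Next I write $\rho_k=\frac12(I+\boldsymbol r_k\cdot\boldsymbol\sigma)$. Among the Pauli matrices, $\sigma_x,\sigma_z$ are real and $\sigma_y$ is purely imaginary. So a Hermitian $2\times2$ matrix $\frac12(I+\boldsymbol r\cdot\boldsymbol\sigma)$ is real exactly when the $y$-component of $\boldsymbol r$ is zero. I then invoke the standard fact about the adjoint action of $SU(2)$: the map $W\mapsto R_W$ with $W(\boldsymbol r\cdot\boldsymbol\sigma)W^\dagger=(R_W\boldsymbol r)\cdot\boldsymbol\sigma$ sends $U(2)$ onto $SO(3)$. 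Here the global phase is irrelevant, and surjectivity follows by exhibiting the rotations $e^{-i\theta\sigma_j/2}$ about each axis and using the Euler-angle decomposition. Consequently, $\vec\rho$ is imaginarity-free if and only if there is $R\in SO(3)$ with $\langle e_y,R\boldsymbol r_k\rangle=0$ for all $k$. This is the same as $\langle R^{\mathsf T}e_y,\boldsymbol r_k\rangle=0$ for all $k$. Since $R^{\mathsf T}e_y$ ranges over all unit vectors, the condition says that all $\boldsymbol r_k$ are orthogonal to a common nonzero vector $\boldsymbol u$. Equivalently, the vectors lie in a common plane through the origin.

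Finally, the $\boldsymbol r_k$ lie in a common plane through the origin if and only if $\operatorname{rank}(B)\le2$. By the identity $\operatorname{rank}(G_{\boldsymbol r_{\vec\rho}})=\operatorname{rank}(B^{\mathsf T}B)=\operatorname{rank}(B)\le 3$ noted above, this holds if and only if $\operatorname{rank}(G_{\boldsymbol r_{\vec\rho}})\le2$. Negating both sides gives that $\vec\rho$ has imaginarity if and only if the rank equals $3$.

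The only step requiring real care is the surjectivity of $U(2)\to SO(3)$. Without it we would only get one direction: rank $3$ implies imaginarity. Since this is classical, I would cite it or verify it briefly via the axis rotations. The remaining steps are routine linear algebra.
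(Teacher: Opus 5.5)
The paper does not prove this lemma itself. It imports it as Theorem III.3 of Ref.~\cite{MSLi}, so there is no in-paper argument to compare yours with.

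Your proof is correct and self-contained. The steps combine as follows:
\begin{itemize}
\item Composing with the basis-change unitary removes the freedom of the basis.
\item The state $\frac12(\mathbb{I}_2+\boldsymbol r\cdot\boldsymbol\sigma)$ is real exactly when $r_y=0$.
\item The map $W\mapsto R_W$ lands in $SO(3)$ and is onto. Together with the transitivity of $SO(3)$ on unit vectors, this shows that $\vec\rho$ is imaginarity-free if and only if all $\boldsymbol r_k$ are orthogonal to a common unit vector.
\item That holds if and only if $\operatorname{rank}(B)\le 2$, which is equivalent to $\operatorname{rank}(G_{\boldsymbol r_{\vec\rho}})\le 2$.
\end{itemize}

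Your argument also fits how the paper uses the lemma. In the proofs of Proposition~\ref{4imaginarity}, Lemma~\ref{567imaginarity} and Theorem~\ref{nimaginarity}, the authors re-derive the direction ``imaginarity-free $\Rightarrow \boldsymbol r_k=(r_{k,1},0,r_{k,2})$'' by hand, exactly as in your second step. They invoke the lemma only for ``$\operatorname{rank}\le 2\Rightarrow$ imaginarity-free''. That is precisely the direction where, as you correctly point out, the surjectivity of $U(2)\to SO(3)$ cannot be dispensed with.
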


Finally, we recall the definition of set coherence. 

\begin{definition}
    \cite{setcoherence,MSLi}
    Let $\vec{\rho}=(\rho_{1},\cdots,\rho_{n})$ be an $n$-tuple of density operators where $\rho_{i}\in \mathcal{D}(\mathbb{C}^{d})$. We say that $\vec{\rho}$ is {\bf incoherent} if there exists a unitary operator $U$ acting on $\mathbb{C}^{d}$ such that $U\rho_{k}U^{\dagger}$ is diagonal with respect to some basis $\{\ket{j}\}_{j=1}^{d}$ for all $k\in\{1,2,\cdots,n\}$. Otherwise, we say that $\vec{\rho}$ is {\bf coherent}.
\end{definition}

As is well known, an $n$-tuple of states $(\rho_{1},\cdots,\rho_{n})$ is incoherent if and only if $[\rho_{i},\rho_{j}]=0$ for all $i,j\in\{1,2,\cdots,n\}$.

\section{The connection between coherence and orthogonality}\label{sec3}

In this section, we always assume that $\vec{\rho}=(\rho_{1},\cdots,\rho_{n})$ ($n\geq 4$) is a tuple of single-qubit states and its Bargmann invariants $\Delta_{w}(\vec{\rho})$ vanish for certain words $w$ of length $n$. Under this assumption, we show that the tuple $\vec{\rho}=(\rho_{1},\cdots,\rho_{n})$ contains a pair of orthogonal states if and only if there exists some $s\in\{1,2,\cdots,n\}$ such that the tuple $(\rho_{k})_{k\neq s}$ is incoherent. We present Proposition \ref{pp:rho1rho2rho3rho4} and Proposition \ref{thm_5rho} for the cases $n=4$ and $n=5$, respectively, and propose Theorem \ref{thm_nrho} for $n\geq 6$. 

\begin{proposition}\label{pp:rho1rho2rho3rho4}
    Let $\rho_{1},\rho_{2},\rho_{3},\rho_{4}$ be four density operators on $\mathbb{C}^{2}$. Suppose that 
    \begin{equation}\label{key condition of pp:4qubit}
        \operatorname{Tr}(\rho_{1}\rho_{2}\rho_{3}\rho_{4})= \operatorname{Tr}(\rho_{1}\rho_{3}\rho_{4}\rho_{2})= \operatorname{Tr}(\rho_{1}\rho_{4}\rho_{2}\rho_{3})=0.
    \end{equation}
    Then the following statements are equivalent:\\
    (\rmnum{1}) There exist some $i,j\in\{1,2,3,4\}$ such that $\rho_{i}\rho_{j}=0$.\\
    (\rmnum{2}) There exists some $s\in\{1,2,3,4\}$ such that the tuple $(\rho_{k})_{k\neq s}$ is incoherent (that is, $[\rho_{k},\rho_{l}]=0$ for all $k,l\in\{1,2,3,4\}\backslash\{s\}$).
\end{proposition}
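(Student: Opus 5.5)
The plan is to work directly with matrix entries in a well-chosen basis rather than with Bloch-vector expansions of the fourth-order invariants. First I would record a symmetry observation. The three words $1234$, $1342$, $1423$ represent, up to cyclic rotation and reversal, all three cyclic orderings of $\{1,2,3,4\}$. Reversal turns $\operatorname{Tr}(\rho_{a}\rho_{b}\rho_{c}\rho_{d})$ into its complex conjugate, so the hypothesis (\ref{key condition of pp:4qubit}) says that $\operatorname{Tr}$ vanishes on \emph{every} cyclic ordering of the four states. This hypothesis is therefore invariant under all relabellings of $\rho_{1},\dots,\rho_{4}$, and both (\rmnum{1}) and (\rmnum{2}) may be treated up to relabelling. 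I will also use the elementary fact that on $\mathbb{C}^{2}$, $\rho_{i}\rho_{j}=0$ forces $\rho_{i},\rho_{j}$ to be the two pure projectors of an orthonormal basis, since each has rank at least one and their supports are orthogonal. This is consistent with Lemma \ref{pp:rho sigma}.

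\emph{(\rmnum{1})$\Rightarrow$(\rmnum{2}).} By symmetry assume $\rho_{1}\rho_{2}=0$, so in a suitable basis $\rho_{1}=\ket{0}\bra{0}$ and $\rho_{2}=\ket{1}\bra{1}$. Each pair $\{1,2\}$ is non-adjacent in exactly one of the three cyclic orders; here that order is $1423$. The other two traces vanish automatically because $\rho_{1}\rho_{2}$ or $\rho_{2}\rho_{1}$ appears adjacently. For the remaining one we get $\operatorname{Tr}(\rho_{1}\rho_{4}\rho_{2}\rho_{3})=(\rho_{4})_{01}(\rho_{3})_{10}$. Its vanishing forces $\rho_{3}$ or $\rho_{4}$ to be diagonal in this basis. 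Removing the other one leaves three simultaneously diagonal states, which commute, so (\rmnum{2}) holds.

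\emph{(\rmnum{2})$\Rightarrow$(\rmnum{1}).} By symmetry assume $\rho_{1},\rho_{2},\rho_{3}$ pairwise commute. Then they are simultaneously diagonal, say $\rho_{k}=\operatorname{diag}(p_{k},1-p_{k})$, and we write $\rho_{4}=\begin{pmatrix} q & z\\ \bar z & 1-q\end{pmatrix}$ with $|z|^{2}\le q(1-q)$. In any cyclic ordering the three diagonal factors can be rotated to be adjacent, so only the diagonal of $\rho_{4}$ contributes. Each of the three traces then equals
\[
p_{1}p_{2}p_{3}\,q+(1-p_{1})(1-p_{2})(1-p_{3})(1-q).
\]
Both summands are nonnegative, so both must vanish. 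Now split into cases. If $q=0$, positivity forces $z=0$ and $\rho_{4}=\ket{1}\bra{1}$. The second product then forces some $p_{j}=1$, i.e.\ $\rho_{j}=\ket{0}\bra{0}$, which is orthogonal to $\rho_{4}$. The case $q=1$ is symmetric. If $0<q<1$, then some $p_{i}=0$ and some $p_{j}=1$; necessarily $i\neq j$, giving $\rho_{i}=\ket{1}\bra{1}$ and $\rho_{j}=\ket{0}\bra{0}$ with $\rho_{i}\rho_{j}=0$.

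I expect the only delicate point to be the symmetry reduction: checking that the three listed words, together with complex conjugation, really cover all cyclic orders, so that the ``WLOG'' relabellings are legitimate. It is tempting instead to expand the traces in Bloch vectors, but that route is error-prone, with sign patterns and triple-product terms, and it is unnecessary. The basis-adapted computation above avoids it entirely, and the remaining case analysis is routine.
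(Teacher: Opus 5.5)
Your proof is correct and follows essentially the same route as the paper, which likewise works in a basis adapted to the orthogonal pair (resp.\ the three commuting states), reduces to $\operatorname{Tr}(\rho_i\rho_k\rho_j\rho_l)=\beta\bar{\gamma}=0$ (resp.\ $p\prod_k a_k+(1-p)\prod_k(1-a_k)=0$), and splits on whether the diagonal entry $p$ of the remaining state lies in $\{0,1\}$ or in $(0,1)$. Your explicit remark that reversal conjugates the trace, so that all six cyclic orders vanish and relabelling is legitimate, is a cleaner justification of the step the paper handles with an unexplained ``or'' between the two orderings in which $\rho_i,\rho_j$ are non-adjacent.
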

\begin{proof}
    (\rmnum{1})$\Rightarrow$(\rmnum{2}). Assume that $\{1,2,3,4\}\backslash\{i,j\}=\{k,l\}$. Due to $\rho_{i}\rho_{j}=0$, we can choose a basis $\ket{0},\ket{1}$ in which 
    \begin{equation}
        \rho_{i}=\begin{pmatrix} 1 & 0 \\ 0 & 0 \end{pmatrix},
        \rho_{j}=\begin{pmatrix} 0 & 0 \\ 0 & 1 \end{pmatrix},
        \rho_{k}=\begin{pmatrix} p & \beta\\ \overline{\beta} & 1-p \end{pmatrix},
        \rho_{l}=\begin{pmatrix} q & \gamma\\ \overline{\gamma} & 1-q \end{pmatrix},
    \end{equation}
    where $p,q\in [0,1]$, $\beta,\gamma\in\mathbb{C}$, $|\beta|^{2}\leq p(1-p)$ and $|\gamma|^{2}\leq q(1-q)$. From \eqref{key condition of pp:4qubit} and cyclicity of the trace, we have 
    \begin{flalign*}
        \operatorname{Tr}(\rho_{2}\rho_{3}\rho_{4}\rho_{1})=0, \operatorname{Tr}(\rho_{2}\rho_{1}\rho_{3}\rho_{4})=0, \operatorname{Tr}(\rho_{2}\rho_{3}\rho_{1}\rho_{4})=0,\\
        \operatorname{Tr}(\rho_{3}\rho_{4}\rho_{1}\rho_{2})=0, \operatorname{Tr}(\rho_{3}\rho_{4}\rho_{2}\rho_{1})=0, \operatorname{Tr}(\rho_{3}\rho_{1}\rho_{4}\rho_{2})=0,\\
        \operatorname{Tr}(\rho_{4}\rho_{1}\rho_{2}\rho_{3})=0, \operatorname{Tr}(\rho_{4}\rho_{2}\rho_{1}\rho_{3})=0, \operatorname{Tr}(\rho_{4}\rho_{2}\rho_{3}\rho_{1})=0.
    \end{flalign*}
    Therefore, we obtain $0=\operatorname{Tr}(\rho_{i}\rho_{k}\rho_{j}\rho_{l})=\beta\bar{\gamma}$ or $0=\operatorname{Tr}(\rho_{i}\rho_{l}\rho_{j}\rho_{k})=\gamma\bar{\beta}$. Hence $\beta=0$ or $\gamma=0$. If $\beta=0$, we have $[\rho_{l_{1}},\rho_{l_{2}}]=0$ for all $l_{1},l_{2}\in\{i,j,k\}$. If $\gamma=0$, we have $[\rho_{l_{1}},\rho_{l_{2}}]=0$ for all $l_{1},l_{2}\in\{i,j,l\}$.\par 
    \noindent
    (\rmnum{2})$\Rightarrow$(\rmnum{1}). Assume that $\{1,2,3,4\}\backslash\{s\}=\{k_{1},k_{2},k_{3}\}$, where $k_{1}<k_{2}<k_{3}$. Due to $[\rho_{k},\rho_{l}]=0$ for all $k,l\in\{k_{1},k_{2},k_{3}\}$, we can choose a basis $\ket{0},\ket{1}$ in which 
    \begin{equation}
        \rho_{k_{1}}=\begin{pmatrix} a_{1} & 0 \\ 0 & 1-a_{1} \end{pmatrix},
        \rho_{k_{2}}=\begin{pmatrix} a_{2} & 0\\ 0 & 1-a_{2} \end{pmatrix},
        \rho_{k_{3}}=\begin{pmatrix} a_{3} & 0\\ 0 & 1-a_{3} \end{pmatrix},
        \rho_{s}=\begin{pmatrix} p & \beta\\ \overline{\beta} & 1-p \end{pmatrix},
    \end{equation}
    where $a_{1},a_{2},a_{3},p\in [0,1]$, $\beta\in \mathbb{C}$ and $|\beta|^{2}\leq p(1-p)$. From \eqref{key condition of pp:4qubit} and cyclicity of the trace, we have 
    $\operatorname{Tr}(\rho_{2}\rho_{3}\rho_{4}\rho_{1})=0$, $\operatorname{Tr}(\rho_{1}\rho_{3}\rho_{4}\rho_{2})=0$, $\operatorname{Tr}(\rho_{1}\rho_{4}\rho_{2}\rho_{3})=0$. With help of this and commutativity, we obtain $0=\operatorname{Tr}(\rho_{k_{1}}\rho_{k_{2}}\rho_{k_{3}}\rho_{s})=
    p\prod\limits_{k=1}^{3}a_{k}+(1-p)\prod\limits_{k=1}^{3}(1-a_{k})$. Therefore, we have $p\prod\limits_{k=1}^{3}a_{k}=0$ and $(1-p)\prod\limits_{k=1}^{3}(1-a_{k})=0$. If $p\in (0,1)$, then 
$\prod\limits_{k=1}^{3}a_{k}=0$ and $\prod\limits_{k=1}^{3}(1-a_{k})=0$. So there exist $i,j\in\{1,2,3\}(i\neq j)$ such that $a_{i}=0$ and $a_{j}=1$. This implies $\rho_{k_{i}}\rho_{k_{j}}=0$. On the other hand, if $p=0$ or $p=1$, then $\beta=0$ from $|\beta|^{2}\leq p(1-p)$. So there exists some $i\in\{1,2,3\}$ such that $a_{i}=1$ or $a_{i}=0$. Hence $\rho_{k_{i}}\rho_{s}=0$.
\end{proof}

Proposition \ref{pp:rho1rho2rho3rho4} can be used to verify set coherence, and the following is an example.

\begin{example}\label{example1}
    Consider the following four density operators on $\mathbb{C}^{2}$:
    \begin{equation*}
    \begin{split}
    \rho_{1}=\frac{1}{2}\begin{pmatrix} 1 & \frac{1}{\sqrt{3}}\\ \frac{1}{\sqrt{3}} & 1 \end{pmatrix} &, 
    \rho_{2}=\frac{1}{2}\begin{pmatrix} 1 & -\frac{i}{\sqrt{3}}\\ \frac{i}{\sqrt{3}} & 1 \end{pmatrix}, \\
    \rho_{3}=\frac{1}{2}\begin{pmatrix} 1+\frac{1}{\sqrt{3}} & 0\\ 0 & 1-\frac{1}{\sqrt{3}}\end{pmatrix} &,
    \rho_{4}=\frac{1}{2}\begin{pmatrix} 1-\frac{1}{\sqrt{3}} & -\frac{1-i}{\sqrt{3}}\\ -\frac{1+i}{\sqrt{3}} & 1+\frac{1}{\sqrt{3}}\end{pmatrix}.
    \end{split}
    \end{equation*}
    By direct computation, we have $\operatorname{Tr}(\rho_{1}\rho_{2}\rho_{3}\rho_{4})=0$, $\operatorname{Tr}(\rho_{1}\rho_{3}\rho_{4}\rho_{2})=0$, $\operatorname{Tr}(\rho_{1}\rho_{4}\rho_{2}\rho_{3})=0$, and $\operatorname{Tr}(\rho_{i}\rho_{j})>0$ for all $i,j\in\{1,2,3,4\}$. Combining this with Proposition \ref{pp:rho1rho2rho3rho4}, we obtain that all of the following tuples are coherent:
    \begin{equation*}
        (\rho_{1},\rho_{2},\rho_{3},\rho_{4}), (\rho_{2},\rho_{3},\rho_{4}), (\rho_{1},\rho_{3},\rho_{4}), (\rho_{1},\rho_{2},\rho_{4}), (\rho_{1},\rho_{2},\rho_{3}). 
    \end{equation*}
    In fact, direct computation shows that $[\rho_{i},\rho_{j}]\neq 0$ for all distinct $i,j\in\{1,2,3,4\}$. 
\end{example}

   For $n=5$, we introduce the following proposition based on the alternating group $A_{5}$. 

\begin{proposition}\label{thm_5rho}
    Let $\rho_{1},\rho_{2},\rho_{3},\rho_{4},\rho_{5}$ be five density operators on $\mathbb{C}^{2}$. Denote the Bargmann scenario $\mathcal{W}=\{(\sigma(1),\sigma(2),\sigma(3),\sigma(4),\sigma(5))|\sigma\in A_{5}\enspace\mbox{and}\enspace\sigma(1)=1\}$, where $A_{5}$ is the alternating group on $\{1,2,3,4,5\}$. Suppose that 
    \begin{equation}\label{key condition of pp:5qubit}
        \Delta_{w}(\rho_{1},\rho_{2},\rho_{3},\rho_{4},\rho_{5})=0,\enspace \forall w\in\mathcal{W}.
    \end{equation}
    Then the following statements are equivalent:\\
    (\rmnum{1}) There exist some $i,j\in\{1,2,3,4,5\}$ such that $\rho_{i}\rho_{j}=0$.\\
    (\rmnum{2}) There exists some $s\in\{1,2,3,4,5\}$ such that the tuple $(\rho_{k})_{k\neq s}$ is incoherent (that is, $[\rho_{k},\rho_{l}]=0$ for all $k,l\in\{1,2,3,4,5\}\backslash\{s\}$).
\end{proposition}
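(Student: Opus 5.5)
The plan is to follow the structure of the proof of Proposition \ref{pp:rho1rho2rho3rho4}. The first step is to record which cyclic orders the hypothesis \eqref{key condition of pp:5qubit} actually gives us. A cyclic rotation of a word of length $5$ is a $5$-cycle, which is an even permutation. Hence the words in $\mathcal{W}$, closed under cyclic equivalence, are exactly the $12$ cyclic orders that come from even permutations. Any two cyclic arrangements that differ by a single transposition of letters lie in opposite classes, so exactly one of them is available.

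For (ii)$\Rightarrow$(i), let $\{1,\dots,5\}\setminus\{s\}=\{k_1,\dots,k_4\}$. I choose a basis in which $\rho_{k_t}=\operatorname{diag}(a_t,1-a_t)$ and $\rho_s=\begin{pmatrix} p&\beta\\ \bar\beta&1-p\end{pmatrix}$. Take any word of $\mathcal{W}$. Since the $\rho_{k_t}$ commute, its trace equals $\operatorname{Tr}(\rho_s D)$ with $D=\prod_t\rho_{k_t}$ diagonal, which gives
\begin{equation*}
p\prod_{t=1}^4 a_t+(1-p)\prod_{t=1}^4(1-a_t)=0 .
\end{equation*}
Both terms are nonnegative, so both vanish. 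If $p\in(0,1)$, we get some $a_i=0$ and some $a_j=1$, hence $\rho_{k_i}\rho_{k_j}=0$. If $p\in\{0,1\}$, then $|\beta|^2\le p(1-p)$ forces $\beta=0$, so $\rho_s$ is a diagonal pure state, and the vanishing product supplies some $\rho_{k_i}$ orthogonal to it. This direction is routine.

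For (i)$\Rightarrow$(ii), suppose $\rho_i\rho_j=0$. By Lemma \ref{pp:rho sigma}-type reasoning, both states are then pure, so I write $\rho_i=\ket{0}\bra{0}$ and $\rho_j=\ket{1}\bra{1}$. The remaining three states are $\rho_k,\rho_l,\rho_m$, with diagonal entries $p,q,r$ and off-diagonal entries $\beta,\gamma,\delta$. The goal is to show that at least two of $\beta,\gamma,\delta$ vanish. If, say, $\gamma=\delta=0$, then removing $s=k$ leaves four diagonal, hence commuting, states.

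Words in which $i$ and $j$ are adjacent give nothing, since $\rho_i\rho_j=0$. In a $5$-cycle, nonadjacent $i,j$ means the word is cyclically of the form $i\,x\,j\,y\,z$ with $x\in\{k,l,m\}$. For each $x$, exactly one of the two orders of $y,z$ is in $\mathcal{W}$. The word $j\,x\,i\,z\,y$ is in the same class but only yields the complex conjugate equation. This gives three equations of the form
\begin{equation*}
\beta\,(\rho_y\rho_z)_{10}=0,\qquad\text{e.g.}\quad \beta\bigl(\bar\gamma r+(1-q)\bar\delta\bigr)=0 ,
\end{equation*}
and analogously for $\gamma$ and $\delta$, where the order of the two factors is dictated by parity.

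The main obstacle is to show that these three equations, together with the positivity constraints $|\beta|^2\le p(1-p)$ and so on, force two off-diagonal entries to vanish. The plan is a case split. Assume $\beta\neq0$ and $\gamma\neq0$. Then the two bracket factors attached to $\beta$ and $\gamma$ vanish, giving two linear relations among $\bar\gamma,\bar\delta$ and among $\bar\beta,\bar\delta$, with coefficients drawn from $q,r,1-q,1-r,p,1-p$.

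I would first try to show that these relations force a phase relation together with a sign contradiction. The coefficients are nonnegative, so $\bar\gamma r=-(1-q)\bar\delta$ pins the phase of $\delta$ against that of $\gamma$. The other equation then pins it in a way that is incompatible unless some coefficient vanishes.

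The degenerate subcases, where a diagonal entry such as $q$ or $r$ is $0$ or $1$, force the corresponding off-diagonal entry to be $0$ through positivity. These would be handled separately, again aiming to conclude that two of $\beta,\gamma,\delta$ are zero. If the phase argument does not close the generic case, the fallback is to also use the third equation, the one attached to $\delta$. Solving the resulting $2\times2$ linear system explicitly would then show that nonzero solutions violate the positivity bounds.
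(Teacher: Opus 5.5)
Your route is the paper's: the same normal form $\rho_i=\ket{0}\bra{0}$, $\rho_j=\ket{1}\bra{1}$, the same three equations from the words in which $i$ and $j$ sit at distance two, and the same (ii)$\Rightarrow$(i) argument. Getting the three words from parity is tidier than the paper's explicit $\tau\circ\sigma_k$. You should, however, state the orientation rather than leave it as ``dictated by parity''. Name $l,m$ so that $(i,k,j,l,m)$ lies in the class. The other two words are obtained from it by $3$-cycles of the letters $k,l,m$, so they are $(i,l,j,m,k)$ and $(i,m,j,k,l)$. This gives the cyclic system
\[
\beta\bigl(\bar\gamma r+(1-q)\bar\delta\bigr)=0,\qquad \gamma\bigl(\bar\delta p+(1-r)\bar\beta\bigr)=0,\qquad \delta\bigl(\bar\beta q+(1-p)\bar\gamma\bigr)=0 .
\]
Its cyclic symmetry is what justifies your WLOG on the pair $\beta,\gamma$, and the case analysis uses exactly this orientation.

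The step that fails is your primary ``phase'' argument, which uses only the two brackets attached to $\beta$ and $\gamma$. These two brackets are not incompatible. Take $\beta,\gamma\neq0$ (so $p,q\in(0,1)$) and $r\in(0,1)$. The brackets then only give $\delta=-\frac{r}{1-q}\gamma$ and $\beta=\frac{pr}{(1-q)(1-r)}\gamma$, which is consistent. For instance, $p=q=r=\frac12$, $\beta=\gamma=\frac14$, $\delta=-\frac14$ satisfies both brackets and all positivity bounds.

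So the third equation is not an optional fallback; it is indispensable. Here $\delta\neq0$, so its bracket forces $\beta=-\frac{1-p}{q}\gamma$, a negative multiple of $\gamma$, and that is the contradiction. Equivalently, as in the paper, the homogeneous $3\times3$ system has determinant $pqr+(1-p)(1-q)(1-r)>0$. What rules out nonzero solutions is therefore nonsingularity, not a violation of $|\beta|^2\le p(1-p)$; positivity only serves to place $p,q,r$ in $(0,1)$.

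The two brackets do suffice when $\delta=0$: the first then forces $r=0$ and the second forces $r=1$. This is the paper's ``exactly two nonzero'' case. With this correction your argument coincides with the paper's proof.
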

\begin{proof}
    (\rmnum{1})$\Rightarrow$(\rmnum{2}). Assume that $\{1,2,3,4,5\}\backslash\{i,j\}=\{k_{1},k_{2},k_{3}\}$. Due to $\rho_{i}\rho_{j}=0$, we can choose a basis $\ket{0},\ket{1}$ in which 
    \begin{equation}\label{eq:rho_i}
        \rho_{i}=\begin{pmatrix} 1 & 0 \\ 0 & 0 \end{pmatrix},
        \rho_{j}=\begin{pmatrix} 0 & 0 \\ 0 & 1 \end{pmatrix},
        \rho_{k_{l}}=\begin{pmatrix} p_{l} & \alpha_{l}\\ \overline{\alpha_{l}} & 1-p_{l} \end{pmatrix}, l=1,2,3
    \end{equation}
where $p_{l}\in [0,1]$, $\alpha_{l}\in \mathbb{C}$ and $|\alpha_{l}|^{2}\leq p_{l}(1-p_{l})$. Due to cyclicity of the trace, we see that \eqref{key condition of pp:5qubit} is equivalent to $\operatorname{Tr}(\rho_{\sigma(1)}\rho_{\sigma(2)}\rho_{\sigma(3)}\rho_{\sigma(4)}\rho_{\sigma(5)})=0$ for all $\sigma\in A_{5}$. Denote $\sigma_{1}=(2\enspace3)(4\enspace5),\sigma_{2}=(2\enspace 5\enspace 3),\sigma_{3}=(2\enspace 4\enspace 3)$. Hence $\sigma_{1},\sigma_{2},\sigma_{3}\in A_{5}$, and we can see that
    \begin{equation}
        \begin{split}
            (\sigma_{1}(1),\sigma_{1}(2),\sigma_{1}(3),\sigma_{1}(4),\sigma_{1}(5))=(1,3,2,5,4),\\
            (\sigma_{2}(1),\sigma_{2}(2),\sigma_{2}(3),\sigma_{2}(4),\sigma_{2}(5))=(1,5,2,4,3),\\
            (\sigma_{3}(1),\sigma_{3}(2),\sigma_{3}(3),\sigma_{3}(4),\sigma_{3}(5))=(1,4,2,3,5).
        \end{split}
    \end{equation}
Because there exists some $\tau\in A_{5}$ such that $\tau(1)=i$ and $\tau(2)=j$, we obtain
\begin{equation}
        \begin{split}
            (\tau_{1}(1),\tau_{1}(2),\tau_{1}(3),\tau_{1}(4),\tau_{1}(5))=(i,\tau(3),j,\tau(5),\tau(4)),\\
            (\tau_{2}(1),\tau_{2}(2),\tau_{2}(3),\tau_{2}(4),\tau_{2}(5))=(i,\tau(5),j,\tau(4),\tau(3)),\\
            (\tau_{3}(1),\tau_{3}(2),\tau_{3}(3),\tau_{3}(4),\tau_{3}(5))=(i,\tau(4),j,\tau(3),\tau(5)),
        \end{split}
    \end{equation}
    where $\tau_{k}:=\tau\circ\sigma_{k}\in A_{5}$ for all $k\in\{1,2,3\}$. Due to $\{\tau(3),\tau(5),\tau(4)\}=\{1,2,3,4,5\}\backslash\{i,j\}$, we can take $k_{1}=\tau(3),k_{2}=\tau(5),k_{3}=\tau(4)$. Hence we have
\begin{equation}
    \label{eq:0=Tr rhoi}
        \begin{cases}
        0=\operatorname{Tr}(\rho_{i}\rho_{k_{1}}\rho_{j}\rho_{k_{2}}\rho_{k_{3}})=\alpha_{1}(\overline{\alpha_{2}}p_{3}+(1-p_{2})\overline{\alpha_{3}}),\\
        0=\operatorname{Tr}(\rho_{i}\rho_{k_{2}}\rho_{j}\rho_{k_{3}}\rho_{k_{1}})=\alpha_{2}(\overline{\alpha_{3}}p_{1}+(1-p_{3})\overline{\alpha_{1}}),\\
        0=\operatorname{Tr}(\rho_{i}\rho_{k_{3}}\rho_{j}\rho_{k_{1}}\rho_{k_{2}})=\alpha_{3}(\overline{\alpha_{1}}p_{2}+(1-p_{1})\overline{\alpha_{2}}).
        \end{cases}
    \end{equation}
We claim that at most one of $\alpha_{1},\alpha_{2},\alpha_{3}$ is non-zero. Two cases are distinguished. If $\alpha_{l}\neq 0$ for all $l\in\{1,2,3\}$, then 
    \begin{equation*}
        \begin{cases}
        \overline{\alpha_{2}}p_{3}+(1-p_{2})\overline{\alpha_{3}}=0,\\
        \overline{\alpha_{3}}p_{1}+(1-p_{3})\overline{\alpha_{1}}=0,\\
        \overline{\alpha_{1}}p_{2}+(1-p_{1})\overline{\alpha_{2}}=0.
        \end{cases}
    \end{equation*}
So we have $0=\det\begin{pmatrix} 0 & p_{3} & 1-p_{2}\\ 1-p_{3} & 0 & p_{1}\\ p_{2} & 1-p_{1} & 0 \end{pmatrix}=p_{1}p_{2}p_{3}+(1-p_{1})(1-p_{2})(1-p_{3})$. However, due to $|\alpha_{l}|^{2}\leq p_{l}(1-p_{l})$ and $\alpha_{l}\neq 0$, we have $p_{l}\in (0,1)$ for all $l\in\{1,2,3\}$, a contradiction. 

We consider another case, namely exactly two elements of $(\alpha_{1},\alpha_{2},\alpha_{3})$ are non-zero. Using \eqref{eq:rho_i} and \eqref{eq:0=Tr rhoi}, we assume without loss of generality that $\alpha_{1}\neq 0$, $\alpha_{2}\neq 0$, and $\alpha_{3}=0$. In this case, we have 
    \begin{flalign}
        \alpha_{1}\overline{\alpha_{2}}p_{3}=\alpha_{1}(\overline{\alpha_{2}}p_{3}+(1-p_{2})\overline{\alpha_{3}})=0,\label{eq2.1 5qubit}\\
        (1-p_{3})\overline{\alpha}_{1}\alpha_{2}=\alpha_{2}(\overline{\alpha_{3}}p_{1}+(1-p_{3})\overline{\alpha_{1}})=0\label{eq2.2 5qubit}.
    \end{flalign}
    Therefore, we obtain $p_{3}=0$ from (\ref{eq2.1 5qubit}) and $p_{3}=1$ from (\ref{eq2.2 5qubit}), a contradiction. We have proven the claim below \eqref{eq:0=Tr rhoi}.\par
    
Consequently, there exist some $s,t\in\{1,2,3\}$ such that $\alpha_{s}=\alpha_{t}=0$. As a result using \eqref{eq:rho_i}, we have $[\rho_{k},\rho_{l}]=0$ for all $k,l\in\{i,j,k_{s},k_{t}\}$. \par

(\rmnum{2})$\Rightarrow$(\rmnum{1}). Assume that $\{1,2,3,4,5\}\backslash\{s\}=\{k_{1},k_{2},k_{3},k_{4}\}$, where $k_{1}<k_{2}<k_{3}<k_{4}$. Due to $[\rho_{k},\rho_{l}]=0$ for all $k,l\in\{k_{1},k_{2},k_{3},k_{4}\}$, we can choose a basis $\ket{0},\ket{1}$ in which
    \begin{equation}
        \rho_{s}=\begin{pmatrix} p & \beta\\ \overline{\beta} & 1-p \end{pmatrix},
        \rho_{k_{l}}=\begin{pmatrix} a_{l} & 0 \\ 0 & 1-a_{l} \end{pmatrix}, l=1,2,3,4
    \end{equation}
    where $a_{l},p\in [0,1]$, $\beta\in \mathbb{C}$ and $|\beta|^{2}\leq p(1-p)$. Due to cyclicity of the trace, we see that \eqref{key condition of pp:5qubit} is equivalent to     
    $\operatorname{Tr}(\rho_{\sigma(1)}\rho_{\sigma(2)}\rho_{\sigma(3)}\rho_{\sigma(4)}\rho_{\sigma(5)})=0$ for all $\sigma\in A_{5}$. With help of this and commutativity, we obtain $0=\operatorname{Tr}(\rho_{k_{1}}\rho_{k_{2}}\rho_{k_{3}}\rho_{k_{4}}\rho_{s})=p\prod\limits_{l=1}^{4}a_{l}+(1-p)\prod\limits_{l=1}^{4}(1-a_{l})$. Therefore, we obtain $p\prod\limits_{l=1}^{4}a_{l}=0$ and $(1-p)\prod\limits_{l=1}^{4}(1-a_{l})=0$. If $p\in (0,1)$, then $\prod\limits_{l=1}^{4}a_{l}=0$ and $\prod\limits_{l=1}^{4}(1-a_{l})=0$. So there exist $i,j\in\{1,2,3,4\}(i\neq j)$ such that $a_{i}=0$ and $a_{j}=1$. This implies $\rho_{k_{i}}\rho_{k_{j}}=0$. If $p=0$ or $p=1$, then $\beta=0$ from $|\beta|^{2}\leq p(1-p)$. Clearly, there exists some $i\in\{1,2,3,4\}$ such that $a_{i}=1$ or $a_{i}=0$. Hence $\rho_{k_{i}}\rho_{s}=0$. 
\end{proof}

    Now, we propose the following theorem for the general case $n\geq 6$.

\begin{theorem}\label{thm_nrho}
    Let $\vec{\rho}=(\rho_{1},\cdots,\rho_{n})$ be an $n$-tuple of single-qubit states where $\rho_{i}\in \mathcal{D}(\mathbb{C}^{2})$ and $n\geq 6$. Denote the Bargmann scenario $\mathcal{W}=\{(\sigma(1),\sigma(2),\cdots,\sigma(n))|\sigma\in S_{n}\enspace\mbox{and}\enspace\sigma(1)=1\}$, where $S_{n}$ is the symmetric group on $\{1,2,\cdots,n\}$. Suppose that 
    \begin{equation}\label{key condition of pp:nqubit}
       \Delta_{w}(\vec{\rho})=0,\enspace\forall w\in\mathcal{W}.
    \end{equation}
    Then the following statements are equivalent:\\
    (\rmnum{1}) There exist some $i,j\in\{1,2,\cdots,n\}$ such that $\rho_{i}\rho_{j}=0$.\\
    (\rmnum{2}) There exists some $s\in\{1,2,\cdots,n\}$ such that the tuple $(\rho_{k})_{k\neq s}$ is incoherent (that is, $[\rho_{k},\rho_{l}]=0$ for all $k,l\in\{1,2,\cdots,n\}\backslash\{s\}$).
\end{theorem}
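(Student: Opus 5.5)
The plan follows the proofs of Proposition \ref{pp:rho1rho2rho3rho4} and Proposition \ref{thm_5rho}. Since $\mathcal{W}$ now comes from all of $S_{n}$, cyclicity of the trace turns \eqref{key condition of pp:nqubit} into $\operatorname{Tr}(\rho_{\sigma(1)}\cdots\rho_{\sigma(n)})=0$ for \emph{every} $\sigma\in S_{n}$. So every ordering of the $n$ states is available, and we never need the group-theoretic bookkeeping used for $A_{5}$.

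\emph{(\rmnum{2})$\Rightarrow$(\rmnum{1}).} This direction copies the earlier ones. Diagonalize the $n-1$ commuting states simultaneously as $\rho_{k_{l}}=\mathrm{diag}(a_{l},1-a_{l})$ and write $\rho_{s}=\begin{pmatrix} p & \beta\\ \overline{\beta} & 1-p\end{pmatrix}$. Place $\rho_{s}$ last. Then \eqref{key condition of pp:nqubit} gives
\[
0=p\prod_{l}a_{l}+(1-p)\prod_{l}(1-a_{l}).
\]
Both terms are nonnegative, so both vanish. If $p\in(0,1)$, some $a_{i}=0$ and some $a_{j}=1$, hence $\rho_{k_{i}}\rho_{k_{j}}=0$. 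If $p\in\{0,1\}$, then $\beta=0$ and some $a_{i}$ equals $p$'s complement, which gives $\rho_{k_{i}}\rho_{s}=0$.

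\emph{(\rmnum{1})$\Rightarrow$(\rmnum{2}).} Take $\rho_{i}=\ket{0}\bra{0}$ and $\rho_{j}=\ket{1}\bra{1}$, and write the remaining $m=n-2\ge 4$ states as $\rho_{k_{l}}=\begin{pmatrix} p_{l} & \alpha_{l}\\ \overline{\alpha_{l}} & 1-p_{l}\end{pmatrix}$. It suffices to show that at most one $\alpha_{l}$ is nonzero. Then all remaining states except possibly one are diagonal, and together with $\rho_{i},\rho_{j}$ they commute. The key identity is
\[
\operatorname{Tr}(\rho_{i}A\rho_{j}B)=A_{01}B_{10}
\]
for arbitrary $2\times 2$ matrices $A,B$. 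Splitting $\{k_{1},\dots,k_{m}\}$ into an ordered block $S$ (between $i$ and $j$) and an ordered block $T$ (after $j$), the hypothesis becomes $(\prod_{S}\rho)_{01}(\prod_{T}\rho)_{10}=0$ for all such ordered partitions.

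Suppose, for contradiction, that the set $N=\{l:\alpha_{l}\ne 0\}$ has at least two elements. Note that $l\in N$ forces $p_{l}\in(0,1)$. Take $S=\{k_{a}\}$ with $a\in N$. Then $(\prod_{T}\rho)_{10}=0$ for every ordering of the other $m-1$ states.

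I will expand this $(1,0)$ entry as a sum over paths in $\{0,1\}$. For a diagonal $\rho_{k_{l}}$ ($l\notin N$), the entries $p_{l},1-p_{l}$ multiply along the path. Two subcases arise:
\begin{itemize}
\item If some diagonal factor has $p_{l}\in\{0,1\}$, then $\rho_{k_{l}}$ equals $\rho_{i}$ or $\rho_{j}$. I plan to place it so that it kills unwanted paths: moving such projectors around isolates single off-diagonal entries.
\item Otherwise every diagonal factor has both entries nonzero, and all $p_{l}\in(0,1)$.
\end{itemize}
As in \eqref{eq:0=Tr rhoi}, I then compare orderings. Put two elements $b,c\in N$ adjacent at the front of $T$ and all other factors after them, and also use the reversed order. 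Subtracting the two equations cancels common terms, leaving something like $\overline{\alpha_{b}}(\cdots)-\overline{\alpha_{c}}(\cdots)$ times positive products. Combining this with the choice $S=\{k_{b}\}$, then $S=\{k_{c}\}$, should reproduce the two-variable contradiction from the proof of Proposition \ref{thm_5rho}, of the form ``$p=0$ and $p=1$'' or a vanishing sum $\prod p+\prod(1-p)=0$ of positive terms.

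A cleaner alternative is to take $S$ to be a single state and $T$ a carefully chosen ordering in which only one path survives. This path reads $\overline{\alpha_{b}}$ times diagonal entries $1-p$ before it and $p$ after it. Its surviving product is nonzero, which contradicts $\alpha_{a}\ne0$.

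I expect the main obstacle to be exactly this combinatorial step: for $m\ge4$ and arbitrary nonzero $\alpha$'s with complex phases, exhibit an ordering, or a linear combination of the vanishing orderings, whose $(1,0)$ entry is provably nonzero. Unlike $n=5$, the number of terms grows with $n$. I would first try induction on $|N|$, using the reversed ordering and the symmetrized sum over orderings of $T$ to cancel cross terms. If that fails, I would reduce to the $n=5$ computation by grouping the diagonal factors, which commute and so collapse into one diagonal matrix with positive entries.
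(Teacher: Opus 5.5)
\textbf{There is a genuine gap: your proposal does not prove (i)$\Rightarrow$(ii) once three or more $\alpha_l$ are non-zero, and that is where the whole difficulty lies.}

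What you have is sound as far as it goes. Your (ii)$\Rightarrow$(i) is the paper's argument and is fine. In (i)$\Rightarrow$(ii), the normalization, the identity $\operatorname{Tr}(\rho_iA\rho_jB)=A_{01}B_{10}$, and the reduction to showing $|N|\le 1$ are correct. Your single-path argument also settles $|N|=2$: put the diagonal factors with $p_l<1$ before $\rho_{k_b}$ and those with $p_l>0$ after it.

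The case $|N|\ge 3$ is the generic one, since all $n-2$ of the $\alpha_l$ may be non-zero. You flag it as ``the main obstacle'', but none of the strategies you list can work as stated:
\begin{itemize}
\item Grouping the diagonal factors leaves all $|N|$ non-diagonal ones in place. It can reach the $n=5$ situation only when $|N|\le 3$, and even then you would need to know that deleting or merging diagonal states preserves the hypothesis.
\item The other strategies all take $S$ to be a single state, and singleton blocks are provably insufficient.
\end{itemize}
Here is a configuration showing the second point. For $n=6$, let $\psi$ be a pure state outside the computational basis, $P=\ket{\psi}\bra{\psi}$, $Q=\ket{\psi^{\perp}}\bra{\psi^{\perp}}$, and take the four remaining states to be $P,P,Q,Q$. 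All four $\alpha_l$ are non-zero. Yet any product of three of them, in any order, contains $PQ$ or $QP$ and so vanishes. Hence every constraint with $|S|=1$ or $|T|=1$ holds, and no ordering of $T$ (nor any linear combination of orderings) has a non-zero $(1,0)$ entry. The hypothesis is violated only by $S=(P,P)$, $T=(Q,Q)$, where $\operatorname{Tr}(\rho_iPP\rho_jQQ)=P_{01}Q_{10}\neq 0$. So blocks with $|S|\ge 2$ are indispensable, and you need some mechanism that controls the complex phases.

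The paper supplies three ideas you are missing.
\begin{itemize}
\item[(a)] A diagonal $\rho_{k_s}$ can be deleted without losing the hypothesis. If $p_s<1$, then $\operatorname{Tr}(\rho_iA\rho_j\rho_{k_s}B)=(1-p_s)\operatorname{Tr}(\rho_iA\rho_jB)$; if $p_s=1$, then $\rho_i\rho_{k_s}=\rho_i$. This drives an induction on $n$ from the base case $n=6$ (Lemma \ref{thm_6rho}) and reduces everything to the case where all $\alpha_l\neq 0$.
\item[(b)] Phases are controlled by a sign argument. If $\alpha_s,\alpha_t\neq 0$ and $\braket{0|\rho_{k_s}\rho_{k_t}|1}=p_s\alpha_t+(1-p_t)\alpha_s=0$, then $\alpha_s/\alpha_t=-p_s/(1-p_t)<0$. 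Such relations cannot hold around a triangle $s\to t\to u\to s$, because three negative ratios cannot multiply to $1$.
\item[(c)] Take a pair with $\braket{0|\rho_{k_s}\rho_{k_t}|1}\neq 0$. The two-element block $S=(k_s,k_t)$ gives $B_{10}=0$, where $B$ is the product of the remaining states in any order. Then $S=(k_s)$, $T=(k_t,\dots)$ gives $B_{00}=0$, so $B\ket{0}=0$. This forces an orthogonal pure pair among the remaining states; the paper re-bases on that pair and iterates.
\end{itemize}
Without something playing the role of (b) and (c), your argument covers only configurations with at most two non-zero $\alpha_l$.
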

\begin{proof}
    We prove Theorem \ref{thm_nrho} in Appendix \ref{Appendix thm_nrho}. 
\end{proof}


\section{The connection between imaginarity and orthogonality}\label{sec4}

In this section, we retain the same assumptions as those in Sec.~\ref{sec3} and additionally assume that $\operatorname{Tr}(\rho_{k}^{2})>\frac{1}{2}$ for all $k\in\{1,2,\cdots,n\}$. Under this assumption, we show that the tuple $\vec{\rho}=(\rho_{1},\cdots,\rho_{n})$ contains a pair of orthogonal states if and only if the tuple $\vec{\rho}$ is imaginarity-free. We present Proposition \ref{4imaginarity} for $n=4$, and propose Theorem \ref{nimaginarity} for $n\geq 4$. 

\begin{proposition}\label{4imaginarity}
     Let $\rho_{1},\rho_{2},\rho_{3},\rho_{4}$ be four density operators on $\mathbb{C}^{2}$. Suppose that 
    \begin{equation}\label{keyeq-4imaginarity}
        \operatorname{Tr}(\rho_{1}\rho_{2}\rho_{3}\rho_{4})=0, \operatorname{Tr}(\rho_{1}\rho_{3}\rho_{4}\rho_{2})=0, \operatorname{Tr}(\rho_{1}\rho_{4}\rho_{2}\rho_{3})=0,
    \end{equation}
    and $\operatorname{Tr}(\rho_{k}^{2})>\frac{1}{2}$ for all $k\in\{1,2,3,4\}$. Then the following statements are equivalent:\\
    (\rmnum{1}) There exist some $i,j\in\{1,2,3,4\}$ such that $\operatorname{Tr}(\rho_{i}\rho_{j})=0$.\\
    (\rmnum{2}) The tuple $\vec{\rho}=(\rho_{1},\rho_{2},\rho_{3},\rho_{4})$ is imaginarity-free.
\end{proposition}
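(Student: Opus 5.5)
The plan is to translate both conditions into statements about the Bloch vectors $\boldsymbol{r}_{1},\dots,\boldsymbol{r}_{4}$ and then use Lemma \ref{th:imaginarity}. Two preliminary remarks will be used throughout. First, $\operatorname{Tr}(\rho_{k}^{2})=\frac{1}{2}(1+|\boldsymbol{r}_{k}|^{2})>\frac{1}{2}$ means exactly that every $\boldsymbol{r}_{k}\neq 0$. Second, by Lemma \ref{pp:rho sigma} applied to the word $12$, $\operatorname{Tr}(\rho_{i}\rho_{j})=0$ if and only if $\rho_{i}\rho_{j}=0$. For qubits the latter holds if and only if $\rho_{i},\rho_{j}$ are pure with $\boldsymbol{r}_{i}=-\boldsymbol{r}_{j}$.

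(\rmnum{1})$\Rightarrow$(\rmnum{2}). Given $\rho_{i}\rho_{j}=0$, I will reuse the first half of the proof of Proposition \ref{pp:rho1rho2rho3rho4}. In a suitable basis $\rho_{i}=\ket{0}\bra{0}$ and $\rho_{j}=\ket{1}\bra{1}$, and the vanishing conditions \eqref{keyeq-4imaginarity} force $\beta=0$ or $\gamma=0$. Say $\beta=0$. Then $\boldsymbol{r}_{i},\boldsymbol{r}_{j},\boldsymbol{r}_{k}$ all lie on the $z$-axis, so $\boldsymbol{r}_{1},\dots,\boldsymbol{r}_{4}$ span a space of dimension at most $2$. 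Hence $\operatorname{rank}(G_{\boldsymbol{r}_{\vec{\rho}}})\leq 2$, and Lemma \ref{th:imaginarity} shows that $\vec{\rho}$ is imaginarity-free. This direction does not even need the purity assumption.

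(\rmnum{2})$\Rightarrow$(\rmnum{1}). By Lemma \ref{th:imaginarity}, the rank of the Gram matrix is at most $2$, so the four Bloch vectors are coplanar. After a unitary, which changes none of the traces, they lie in the $xz$-plane, and every $\rho_{k}$ is a real symmetric matrix. Realness gives $\operatorname{Tr}(\rho_{a}\rho_{b}\rho_{c}\rho_{d})=\operatorname{Tr}((\rho_{a}\rho_{b}\rho_{c}\rho_{d})^{\mathsf T})=\operatorname{Tr}(\rho_{d}\rho_{c}\rho_{b}\rho_{a})$. Thus \eqref{keyeq-4imaginarity} yields vanishing for all six cyclic orderings of $\{1,2,3,4\}$, and we may freely choose which pairs stand adjacent.

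For the computation, write $J=XZ$, which is real and antisymmetric with $J^{2}=-I$. For planar vectors one has $\rho_{a}\rho_{b}=\frac{1}{4}[(1+\boldsymbol{a}\cdot\boldsymbol{b})I+(\boldsymbol{a}+\boldsymbol{b})\cdot\boldsymbol{\sigma}+(\boldsymbol{a}\times\boldsymbol{b})J]$, where $\boldsymbol{a}\times\boldsymbol{b}$ is the scalar planar cross product. Since $I,X,Z,J$ are trace-orthogonal, this gives
\begin{equation*}
8\operatorname{Tr}(\rho_{a}\rho_{b}\rho_{c}\rho_{d})=(1+\boldsymbol{a}\cdot\boldsymbol{b})(1+\boldsymbol{c}\cdot\boldsymbol{d})+(\boldsymbol{a}+\boldsymbol{b})\cdot(\boldsymbol{c}+\boldsymbol{d})-(\boldsymbol{a}\times\boldsymbol{b})(\boldsymbol{c}\times\boldsymbol{d}).
\end{equation*}
Next I will parametrize $\boldsymbol{r}_{k}=t_{k}(\cos\phi_{k},\sin\phi_{k})$ with $t_{k}\in(0,1]$. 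The aim is to show that three of these expressions vanishing, one for each way of splitting $\{1,2,3,4\}$ into two adjacent pairs, forces some $t_{i}=t_{j}=1$ and $\phi_{i}-\phi_{j}=\pi$.

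I expect this step to be the main obstacle. I will first try to show that each expression is a sum of nonnegative terms when the states are real. One route is to decompose each real $\rho_{k}$ as a convex combination of real pure states $\ket{\psi}\bra{\psi}$ with $\ket{\psi}=(\cos\theta,\sin\theta)^{\mathsf T}$. Then the invariant for real pure states becomes $\cos(\theta_{a}-\theta_{b})\cos(\theta_{b}-\theta_{c})\cos(\theta_{c}-\theta_{d})\cos(\theta_{d}-\theta_{a})$, and this need not be nonnegative. So if that fails, the fallback is to add the three splittings together, as in the determinant argument of Proposition \ref{thm_5rho}. Using the identity $(\boldsymbol{a}\times\boldsymbol{b})(\boldsymbol{c}\times\boldsymbol{d})=(\boldsymbol{a}\cdot\boldsymbol{c})(\boldsymbol{b}\cdot\boldsymbol{d})-(\boldsymbol{a}\cdot\boldsymbol{d})(\boldsymbol{b}\cdot\boldsymbol{c})$, I will then show that the sum of the three expressions can vanish only at the boundary configuration where two of the vectors are antipodal unit vectors. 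This is exactly $\rho_{i}\rho_{j}=0$, hence $\operatorname{Tr}(\rho_{i}\rho_{j})=0$. The hypothesis $\boldsymbol{r}_{k}\neq 0$ enters here: it rules out degenerate solutions, such as a maximally mixed state collapsing the planar cross-product terms.
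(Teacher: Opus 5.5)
Your direction (\rmnum{1})$\Rightarrow$(\rmnum{2}) is correct and is the paper's argument. Your set-up for (\rmnum{2})$\Rightarrow$(\rmnum{1}) is also correct: real states, planar Bloch vectors, and a formula for $8\operatorname{Tr}(\rho_a\rho_b\rho_c\rho_d)$ that agrees with the real part computed in the paper. The gap is the decisive step. You do not carry it out, and both routes you propose for it fail. Termwise nonnegativity fails, as you suspected.

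The fallback claim is false: the sum of the three expressions can vanish with no orthogonal pair. In the $xz$-plane take $\boldsymbol{r}_{1}=(1,0)$, $\boldsymbol{r}_{2}=(0,1)$, $\boldsymbol{r}_{3}=-t\boldsymbol{r}_{1}$, $\boldsymbol{r}_{4}=-t\boldsymbol{r}_{2}$ with $0<t<1$. Your formula gives $8\operatorname{Tr}(\rho_{1}\rho_{2}\rho_{3}\rho_{4})=1-2t-t^{2}$ and $8\operatorname{Tr}(\rho_{1}\rho_{3}\rho_{4}\rho_{2})=8\operatorname{Tr}(\rho_{1}\rho_{4}\rho_{2}\rho_{3})=(1-t)^{2}$. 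The sum is $t^{2}-6t+3$, which vanishes at $t=3-\sqrt{6}\in(0,1)$. Yet every $\boldsymbol{r}_{k}\neq 0$, and $\operatorname{Tr}(\rho_{i}\rho_{j})=\frac{1}{2}(1+\boldsymbol{r}_{i}\cdot\boldsymbol{r}_{j})\geq\frac{1-t}{2}>0$ for all $i,j$. Adding the three conditions discards exactly the information you need. Passing to six cyclic orderings also gains nothing: for real matrices each ordering and its reversal give the same trace, so you still have only the same three conditions.

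The missing idea is to subtract rather than add. Write $S=1+\sum_{i<j}\boldsymbol{r}_{i}\cdot\boldsymbol{r}_{j}$, $P_{1}=(\boldsymbol{r}_{1}\cdot\boldsymbol{r}_{2})(\boldsymbol{r}_{3}\cdot\boldsymbol{r}_{4})$, $P_{2}=(\boldsymbol{r}_{1}\cdot\boldsymbol{r}_{3})(\boldsymbol{r}_{2}\cdot\boldsymbol{r}_{4})$ and $P_{3}=(\boldsymbol{r}_{1}\cdot\boldsymbol{r}_{4})(\boldsymbol{r}_{2}\cdot\boldsymbol{r}_{3})$. The three expressions are then $S+P_{1}-P_{2}+P_{3}$, $S+P_{1}+P_{2}-P_{3}$ and $S-P_{1}+P_{2}+P_{3}$. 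By the very identity you quote, their pairwise differences are $\pm 2(\boldsymbol{r}_{1}\times\boldsymbol{r}_{2})(\boldsymbol{r}_{3}\times\boldsymbol{r}_{4})$, $\pm 2(\boldsymbol{r}_{1}\times\boldsymbol{r}_{3})(\boldsymbol{r}_{2}\times\boldsymbol{r}_{4})$ and $\pm 2(\boldsymbol{r}_{1}\times\boldsymbol{r}_{4})(\boldsymbol{r}_{2}\times\boldsymbol{r}_{3})$, so all three products vanish.

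This is where $\boldsymbol{r}_{k}\neq 0$ genuinely enters. Parallelism is then an equivalence relation on $\{1,2,3,4\}$ that meets each of the three perfect matchings, so some class has at least three elements. Three states therefore commute. The already proven (\rmnum{2})$\Rightarrow$(\rmnum{1}) of Proposition \ref{pp:rho1rho2rho3rho4} then gives the orthogonal pair: diagonalise the commuting triple, and a vanishing invariant becomes $p\,a_{1}a_{2}a_{3}+(1-p)(1-a_{1})(1-a_{2})(1-a_{3})=0$. This is the paper's proof. There is no need to show directly that the conditions force antipodal unit vectors.
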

\begin{proof}
    We write the four states in Bloch form,     
    \begin{equation}
        \rho_{k}=\frac{\mathbb{I}_{2}+\boldsymbol{r}_{k}\cdot\boldsymbol{\sigma}}{2}
    \end{equation}
    with $|\boldsymbol{r}_{k}|\leq 1$ ($k=1,2,3,4$). For all $k\in\{1,2,3,4\}$, we have $\boldsymbol{r}_{k}\neq 0$ because $\operatorname{Tr}(\rho_{k}^{2})\neq\frac{1}{2}$.\par
    \noindent
    (\rmnum{1})$\Rightarrow$(\rmnum{2}). Note that $\operatorname{Tr}(\rho_{i}\rho_{j})=0$ is equivalent to $\rho_{i}\rho_{j}=0$. From Proposition \ref{pp:rho1rho2rho3rho4}, there exists some $s\in\{1,2,3,4\}$ such that $[\rho_{k},\rho_{l}]=0$ for all $k,l\in\{1,2,3,4\}\backslash\{s\}$.  So at least three vectors in the sequence $\{\boldsymbol{r}_{k}\}_{k=1}^{4}$ are collinear. Denote $B=(\boldsymbol{r}_{1},\boldsymbol{r}_{2},\boldsymbol{r}_{3},\boldsymbol{r}_{4})$. Then we have ${\rm rank}(G_{\boldsymbol{r}_{\vec{\rho}}})={\rm rank}(B^{\mathsf{T}}B)={\rm rank}(B)\leq 2$ where $G_{\boldsymbol{r}_{\vec{\rho}}}=(\braket{\boldsymbol{r}_{k},\boldsymbol{r}_{l}})_{4\times 4}=B^TB$. Therefore, the tuple $\vec{\rho}=(\rho_{1},\rho_{2},\rho_{3},\rho_{4})$ is imaginarity-free from Lemma \ref{th:imaginarity}.\par
    
    \noindent
    (\rmnum{2})$\Rightarrow$(\rmnum{1}). Since $\vec{\rho}=(\rho_{1},\rho_{2},\rho_{3},\rho_{4})$ is imaginarity-free, we can assume without loss of generality that $\rho_{k}$ is real for all $k\in\{1,2,3,4\}$. In this case, $\operatorname{Tr}(\rho_{k}\sigma_{y})=0$ for all $k\in\{1,2,3,4\}$. Consequently, there exist $r_{k,1},r_{k,2}\in\mathbb{R}$ such that $\boldsymbol{r}_{k}=(r_{k,1},0,r_{k,2})$. Denote $\boldsymbol{u}_{k}=(r_{k,1},r_{k,2})$ for all $k\in\{1,2,3,4\}$. Clearly, $\boldsymbol{u}_{k}\neq 0$ because $\boldsymbol{r}_{k}\neq 0$. As is well known, we have the following fact
    \begin{equation}
        (\boldsymbol{a}\cdot\boldsymbol{\sigma})(\boldsymbol{b}\cdot\boldsymbol{\sigma})=(\boldsymbol{a}\cdot\boldsymbol{b})\mathbb{I}_{2}+i(\boldsymbol{a}\times\boldsymbol{b})\cdot\boldsymbol{\sigma}. 
    \end{equation}
    So we have     
    \begin{flalign*}
        8\cdot\operatorname{Tr}(\rho_{1}\rho_{2}\rho_{3}\rho_{4}) &=
         1+\sum\limits_{i<j}\boldsymbol{r}_{i}\cdot\boldsymbol{r}_{j}+(\boldsymbol{r}_{1}\cdot\boldsymbol{r}_{2})(\boldsymbol{r}_{3}\cdot\boldsymbol{r}_{4})
         -(\boldsymbol{r}_{1}\cdot\boldsymbol{r}_{3})(\boldsymbol{r}_{2}\cdot\boldsymbol{r}_{4})+(\boldsymbol{r}_{1}\cdot\boldsymbol{r}_{4})(\boldsymbol{r}_{2}\cdot\boldsymbol{r}_{3})\nonumber\\
         &\quad +i[(\boldsymbol{r}_{3}+\boldsymbol{r}_{4})\cdot(\boldsymbol{r}_{1}\times\boldsymbol{r}_{2})+(\boldsymbol{r}_{1}+\boldsymbol{r}_{2})\cdot(\boldsymbol{r}_{3}\times\boldsymbol{r}_{4})],\\
         8\cdot\operatorname{Tr}(\rho_{1}\rho_{3}\rho_{4}\rho_{2}) &=
         1+\sum\limits_{i<j}\boldsymbol{r}_{i}\cdot\boldsymbol{r}_{j}+(\boldsymbol{r}_{1}\cdot\boldsymbol{r}_{3})(\boldsymbol{r}_{2}\cdot\boldsymbol{r}_{4})
         -(\boldsymbol{r}_{1}\cdot\boldsymbol{r}_{4})(\boldsymbol{r}_{2}\cdot\boldsymbol{r}_{3})+(\boldsymbol{r}_{1}\cdot\boldsymbol{r}_{2})(\boldsymbol{r}_{3}\cdot\boldsymbol{r}_{4})\nonumber\\
         &\quad +i[(\boldsymbol{r}_{4}+\boldsymbol{r}_{2})\cdot(\boldsymbol{r}_{1}\times\boldsymbol{r}_{3})+(\boldsymbol{r}_{1}+\boldsymbol{r}_{3})\cdot(\boldsymbol{r}_{4}\times\boldsymbol{r}_{2})],\\
         8\cdot\operatorname{Tr}(\rho_{1}\rho_{4}\rho_{2}\rho_{3}) &=
         1+\sum\limits_{i<j}\boldsymbol{r}_{i}\cdot\boldsymbol{r}_{j}+(\boldsymbol{r}_{1}\cdot\boldsymbol{r}_{4})(\boldsymbol{r}_{2}\cdot\boldsymbol{r}_{3})
         -(\boldsymbol{r}_{1}\cdot\boldsymbol{r}_{2})(\boldsymbol{r}_{3}\cdot\boldsymbol{r}_{4})+(\boldsymbol{r}_{1}\cdot\boldsymbol{r}_{3})(\boldsymbol{r}_{2}\cdot\boldsymbol{r}_{4})\nonumber\\
         &\quad +i[(\boldsymbol{r}_{2}+\boldsymbol{r}_{3})\cdot(\boldsymbol{r}_{1}\times\boldsymbol{r}_{4})+(\boldsymbol{r}_{1}+\boldsymbol{r}_{4})\cdot(\boldsymbol{r}_{2}\times\boldsymbol{r}_{3})].
    \end{flalign*}
    With help of this and \eqref{keyeq-4imaginarity}, we obtain
    \begin{flalign}
        1+\sum\limits_{i<j}\boldsymbol{r}_{i}\cdot\boldsymbol{r}_{j}+(\boldsymbol{r}_{1}\cdot\boldsymbol{r}_{2})(\boldsymbol{r}_{3}\cdot\boldsymbol{r}_{4})
         -(\boldsymbol{r}_{1}\cdot\boldsymbol{r}_{3})(\boldsymbol{r}_{2}\cdot\boldsymbol{r}_{4})+(\boldsymbol{r}_{1}\cdot\boldsymbol{r}_{4})(\boldsymbol{r}_{2}\cdot\boldsymbol{r}_{3})=0,\label{eq2.1-4imaginarity}\\
         1+\sum\limits_{i<j}\boldsymbol{r}_{i}\cdot\boldsymbol{r}_{j}+(\boldsymbol{r}_{1}\cdot\boldsymbol{r}_{3})(\boldsymbol{r}_{2}\cdot\boldsymbol{r}_{4})
         -(\boldsymbol{r}_{1}\cdot\boldsymbol{r}_{4})(\boldsymbol{r}_{2}\cdot\boldsymbol{r}_{3})+(\boldsymbol{r}_{1}\cdot\boldsymbol{r}_{2})(\boldsymbol{r}_{3}\cdot\boldsymbol{r}_{4})=0,\label{eq2.2-4imaginarity}\\
         1+\sum\limits_{i<j}\boldsymbol{r}_{i}\cdot\boldsymbol{r}_{j}+(\boldsymbol{r}_{1}\cdot\boldsymbol{r}_{4})(\boldsymbol{r}_{2}\cdot\boldsymbol{r}_{3})
         -(\boldsymbol{r}_{1}\cdot\boldsymbol{r}_{2})(\boldsymbol{r}_{3}\cdot\boldsymbol{r}_{4})+(\boldsymbol{r}_{1}\cdot\boldsymbol{r}_{3})(\boldsymbol{r}_{2}\cdot\boldsymbol{r}_{4})=0.\label{eq2.3-4imaginarity}
    \end{flalign}
    By calculating \eqref{eq2.1-4imaginarity}$-$\eqref{eq2.2-4imaginarity} and \eqref{eq2.1-4imaginarity}$-$\eqref{eq2.3-4imaginarity}, we have
    \begin{equation*}
        \begin{cases}
            2(\boldsymbol{r}_{1}\cdot\boldsymbol{r}_{4})(\boldsymbol{r}_{2}\cdot\boldsymbol{r}_{3})-2(\boldsymbol{r}_{1}\cdot\boldsymbol{r}_{3})(\boldsymbol{r}_{2}\cdot\boldsymbol{r}_{4})=0,\\
            2(\boldsymbol{r}_{1}\cdot\boldsymbol{r}_{2})(\boldsymbol{r}_{3}\cdot\boldsymbol{r}_{4})
         -2(\boldsymbol{r}_{1}\cdot\boldsymbol{r}_{3})(\boldsymbol{r}_{2}\cdot\boldsymbol{r}_{4})=0.
        \end{cases}
    \end{equation*}
    Hence $(\boldsymbol{r}_{1}\cdot\boldsymbol{r}_{2})(\boldsymbol{r}_{3}\cdot\boldsymbol{r}_{4})=(\boldsymbol{r}_{1}\cdot\boldsymbol{r}_{3})(\boldsymbol{r}_{2}\cdot\boldsymbol{r}_{4})=(\boldsymbol{r}_{1}\cdot\boldsymbol{r}_{4})(\boldsymbol{r}_{2}\cdot\boldsymbol{r}_{3})$, which implies that
    \begin{equation}\label{eq3.1-4imaginarity}
        (\boldsymbol{u}_{1}\cdot\boldsymbol{u}_{2})(\boldsymbol{u}_{3}\cdot\boldsymbol{u}_{4})=(\boldsymbol{u}_{1}\cdot\boldsymbol{u}_{3})(\boldsymbol{u}_{2}\cdot\boldsymbol{u}_{4})=(\boldsymbol{u}_{1}\cdot\boldsymbol{u}_{4})(\boldsymbol{u}_{2}\cdot\boldsymbol{u}_{3}).
    \end{equation}
    As is well known, we have the following fact
    \begin{equation}\label{eq3.2-4imaginarity}
        \det(\boldsymbol{a},\boldsymbol{b})\det(\boldsymbol{c},\boldsymbol{d})=(\boldsymbol{a}\cdot\boldsymbol{c})(\boldsymbol{b}\cdot\boldsymbol{d})-(\boldsymbol{a}\cdot\boldsymbol{d})(\boldsymbol{b}\cdot\boldsymbol{c})
    \end{equation}
    for $\forall\boldsymbol{a},\boldsymbol{b},\boldsymbol{c},\boldsymbol{d}\in\mathbb{R}^{2}$. From \eqref{eq3.1-4imaginarity} and \eqref{eq3.2-4imaginarity}, we can see that 
    \begin{equation}
    \begin{cases}
    \det(\boldsymbol{u}_{1},\boldsymbol{u}_{2})
    \det(\boldsymbol{u}_{3},\boldsymbol{u}_{4})
    =(\boldsymbol{u}_{1}\cdot\boldsymbol{u}_{3})(\boldsymbol{u}_{2}\cdot\boldsymbol{u}_{4})-(\boldsymbol{u}_{1}\cdot\boldsymbol{u}_{4})(\boldsymbol{u}_{2}\cdot\boldsymbol{u}_{3})=0,\\
    \det(\boldsymbol{u}_{1},\boldsymbol{u}_{3})
    \det(\boldsymbol{u}_{2},\boldsymbol{u}_{4})
    =(\boldsymbol{u}_{1}\cdot\boldsymbol{u}_{2})(\boldsymbol{u}_{3}\cdot\boldsymbol{u}_{4})-(\boldsymbol{u}_{1}\cdot\boldsymbol{u}_{4})(\boldsymbol{u}_{2}\cdot\boldsymbol{u}_{3})=0,\\
    \det(\boldsymbol{u}_{1},\boldsymbol{u}_{4})
    \det(\boldsymbol{u}_{2},\boldsymbol{u}_{3})
    =(\boldsymbol{u}_{1}\cdot\boldsymbol{u}_{2})(\boldsymbol{u}_{3}\cdot\boldsymbol{u}_{4})-(\boldsymbol{u}_{1}\cdot\boldsymbol{u}_{3})(\boldsymbol{u}_{2}\cdot\boldsymbol{u}_{4})=0.
    \end{cases}
    \end{equation}
    With help of this and $\boldsymbol{u}_{k}\neq 0$ for all $k\in\{1,2,3,4\}$, we obtain that at least three vectors in the sequence $\{\boldsymbol{u}_{k}\}_{k=1}^{4}$ are collinear, which is equivalent to at least three vectors in the sequence $\{\boldsymbol{r}_{k}\}_{k=1}^{4}$ are collinear. Therefore, there exists some $s\in\{1,2,3,4\}$ such that $[\rho_{k},\rho_{l}]=0$ for all $k,l\in\{1,2,3,4\}\backslash\{s\}$. From Proposition \ref{pp:rho1rho2rho3rho4}, we can finish the proof. 
\end{proof}

     Proposition \ref{4imaginarity} can be used to verify multistate imaginarity. To illustrate this, consider the states $\rho_{1},\rho_{2},\rho_{3},\rho_{4}$ from Example \ref{example1}. Using Proposition \ref{4imaginarity}, one can readily verify that the tuple $\vec{\rho}=(\rho_{1},\rho_{2},\rho_{3},\rho_{4})$ has imaginarity. To show that the condition $\operatorname{Tr}(\rho_{k}^{2})>\frac{1}{2}$ for all $k\in\{1,2,3,4\}$ is necessary in Proposition \ref{4imaginarity}, we provide the following example.

\begin{example}
     Consider the following four density operators on $\mathbb{C}^{2}$:
    \begin{equation*}
    \begin{split}
    \rho_{1}=\begin{pmatrix} \frac{1}{2} & 0\\ 0 & \frac{1}{2} \end{pmatrix} &, 
    \rho_{2}=\begin{pmatrix} \frac{1}{2} & \frac{1}{\sqrt{6}}\\ \frac{1}{\sqrt{6}} & \frac{1}{2} \end{pmatrix}, \\
    \rho_{3}=\frac{1}{2}\begin{pmatrix} 1+\frac{1}{\sqrt{2}} & -\frac{1}{\sqrt{6}}\\ -\frac{1}{\sqrt{6}} & 1-\frac{1}{\sqrt{2}}\end{pmatrix} &,
    \rho_{4}=\frac{1}{2}\begin{pmatrix} 1-\frac{1}{\sqrt{2}} & -\frac{1}{\sqrt{6}}\\ -\frac{1}{\sqrt{6}} & 1+\frac{1}{\sqrt{2}}\end{pmatrix}.
    \end{split}
    \end{equation*}
    By direct computation, we have $\operatorname{Tr}(\rho_{1}\rho_{2}\rho_{3}\rho_{4})=\operatorname{Tr}(\rho_{1}\rho_{3}\rho_{4}\rho_{2})=\operatorname{Tr}(\rho_{1}\rho_{4}\rho_{2}\rho_{3})=0$, and $\operatorname{Tr}(\rho_{i}\rho_{j})>0$ for all distinct $i,j\in\{1,2,3,4\}$. Clearly, $\vec{\rho}=(\rho_{1},\rho_{2},\rho_{3},\rho_{4})$ is imaginarity-free. Hence this example shows that the condition $\operatorname{Tr}(\rho_{k}^{2})>\frac{1}{2}$ for all $k\in\{1,2,3,4\}$ in Proposition \ref{4imaginarity} is indispensable. 
\end{example}

    Now, we propose the following theorem for the general case $n\geq 4$.

\begin{theorem}\label{nimaginarity}
    Let $\vec{\rho}=(\rho_{1},\cdots,\rho_{n})$ be an $n$-tuple of single qubit states where $\rho_{i}\in \mathcal{D}(\mathbb{C}^{2})$ and $n\geq 4$. Denote the Bargmann scenario $\mathcal{W}=\{(\tau(1),\tau(2),\cdots,\tau(n))|\tau\in S_{n}\enspace\mbox{and}\enspace\tau(1)=1\}$, where $S_{n}$ is the symmetric group on $\{1,2,\cdots,n\}$. Suppose that 
    \begin{equation}\label{keyeq-nimaginarity}
       \Delta_{w}(\vec{\rho})=0,\enspace\forall w\in\mathcal{W}.
    \end{equation}
    and $\operatorname{Tr}(\rho_{k}^{2})>\frac{1}{2}$ for all $k\in\{1,2,\cdots,n\}$. Then the following statements are equivalent:\\
    (\rmnum{1}) There exist some $i,j\in\{1,2,\cdots,n\}$ such that $\operatorname{Tr}(\rho_{i}\rho_{j})=0$.\\
    (\rmnum{2}) The tuple $\vec{\rho}=(\rho_{1},\rho_{2},\cdots,\rho_{n})$ is imaginarity-free.
\end{theorem}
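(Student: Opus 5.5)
Throughout, write $\rho_k=\frac{1}{2}(\mathbb{I}_2+\boldsymbol{r}_k\cdot\boldsymbol{\sigma})$. The hypothesis $\operatorname{Tr}(\rho_k^2)>\frac12$ gives $\boldsymbol{r}_k\neq 0$ for every $k$. Also, $\operatorname{Tr}(\rho_i\rho_j)=0$ is equivalent to $\rho_i\rho_j=0$, by Lemma \ref{pp:rho sigma} or by positivity.

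The case $n=4$ is exactly Proposition \ref{4imaginarity}. For $n=5$ the condition \eqref{keyeq-nimaginarity} over $S_5$ contains the $A_5$ condition, so Proposition \ref{thm_5rho} applies. For $n\geq 6$ we invoke Theorem \ref{thm_nrho}. In all cases, under \eqref{keyeq-nimaginarity}, statement (\rmnum{1}) is equivalent to: there is an index $s$ such that $(\rho_k)_{k\neq s}$ is incoherent.

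(\rmnum{1})$\Rightarrow$(\rmnum{2}). By the equivalence above, all $\rho_k$ with $k\neq s$ pairwise commute. Since each $\boldsymbol{r}_k\neq 0$, these $\boldsymbol{r}_k$ are collinear. Hence $B=(\boldsymbol{r}_1,\dots,\boldsymbol{r}_n)$ has rank at most $2$, so ${\rm rank}(G_{\boldsymbol{r}_{\vec\rho}})\leq 2$, and Lemma \ref{th:imaginarity} shows that $\vec\rho$ is imaginarity-free.

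(\rmnum{2})$\Rightarrow$(\rmnum{1}). After a unitary we may assume every $\rho_k$ is real, so $\boldsymbol{r}_k=(r_{k,1},0,r_{k,2})$, and we set $\boldsymbol{u}_k=(r_{k,1},r_{k,2})\neq 0$. The goal is to show that all but at most one of the $\boldsymbol{u}_k$ are collinear, and then conclude by the equivalence above.

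The planned route is a reduction to four states via a structural description. For real qubit states, one identifies the real plane with $\mathbb{C}$, or writes each $\rho_k$ as an element of the Clifford algebra generated by $\sigma_x,\sigma_z$. Then $\operatorname{Tr}(\rho_{\tau(1)}\cdots\rho_{\tau(n)})$ becomes $2^{-n}$ times a sum over subsets $T$ of positions of the trace of the ordered product of the $\boldsymbol{u}_k\cdot\boldsymbol{\sigma}$ with $k\in T$. Each even-length product equals the real part of an alternating product $z_{a}\bar z_{b}z_{c}\bar z_{d}\cdots$, where $z_k=r_{k,1}+ir_{k,2}$, and odd-length products vanish. We then take differences of the vanishing invariants over permutations that differ by a transposition of adjacent positions, or by a swap of two letters. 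Most terms cancel in such differences, as in the passage from \eqref{eq2.1-4imaginarity}--\eqref{eq2.3-4imaginarity} to \eqref{eq3.1-4imaginarity}. The aim is to isolate identities of the form $\det(\boldsymbol{u}_a,\boldsymbol{u}_b)\det(\boldsymbol{u}_c,\boldsymbol{u}_d)\cdot(\text{positive factor})=0$. A natural positive factor is a product of terms $\frac{1}{2}\operatorname{Tr}(\rho\,\cdot)$ or $\operatorname{Tr}$ of the remaining states. These arise from summing over placements of the other $n-4$ letters, using the full symmetry of $S_n$. Combining such identities over all quadruples $\{a,b,c,d\}$ gives the same combinatorial conclusion as in the proof of Proposition \ref{4imaginarity}: among any four of the $\boldsymbol{u}_k$, three are collinear. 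From this, all but at most one of the $\boldsymbol{u}_k$ are collinear.

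The main obstacle is this elimination step. The difficulty is to choose the right signed combination of the $n!$ vanishing invariants so that every term involving the other $n-4$ letters factors out with a strictly positive coefficient. I would first attempt the symmetrization: sum the vanishing invariants over all placements of the remaining letters. The extra factors should then collapse to expressions like $\prod(1\pm\boldsymbol{u}_k\cdot\boldsymbol{e})$, which are nonnegative and, one hopes, positive.

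If that fails, I would use a fallback that avoids the determinant identities. Assume for contradiction that no three of the $\boldsymbol{u}_k$ are collinear in the relevant sense, and use the fact that in the real case $\operatorname{Tr}(\rho_{\tau(1)}\cdots\rho_{\tau(n)})$ and its reversal are equal. Then mimic the appendix proof of Theorem \ref{thm_nrho}, with the vanishing invariants forcing $\rho_i\rho_j=0$ directly.
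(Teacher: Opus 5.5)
Your proof of (i)$\Rightarrow$(ii) is the paper's. Reducing (ii)$\Rightarrow$(i) to ``all but one $\boldsymbol{u}_k$ are collinear'' is also how the paper ends.

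\textbf{The gap.} The elimination step is left as a hope, and the target you set for it, a strictly positive cofactor, is exactly what does not come out. The missing identity is that for real qubit states $[\rho_k,\rho_l]=-\frac{\mathrm{i}}{2}\det(\boldsymbol{u}_k,\boldsymbol{u}_l)\sigma_{y}$. Hence a product of two commutators is a scalar multiple of $\mathbb{I}_{2}$. Your differences over adjacent transpositions, taken twice, are then a signed sum of four vanishing invariants:
\[
0=\operatorname{Tr}([\rho_a,\rho_b][\rho_c,\rho_d]W)=-\tfrac14\det(\boldsymbol{u}_a,\boldsymbol{u}_b)\det(\boldsymbol{u}_c,\boldsymbol{u}_d)\operatorname{Tr}(W),
\]
where $W$ is any ordered product of the other $n-4$ states. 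No bookkeeping in a Clifford expansion is needed.

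The cofactor $\operatorname{Tr}(W)$ is not positive, however. For $n=6$ it is $\operatorname{Tr}(\rho_e\rho_f)$, which vanishes exactly in the orthogonal situation you are trying to reach. With three or more leftover states it can be negative: three real pure states with Bloch vectors at mutual angle $2\pi/3$ give $\operatorname{Tr}(\rho_e\rho_f\rho_g)=-\frac18$. You do not prove that any symmetrization is positive. Your fallback also has no mechanism, since both directions of the proof of Theorem \ref{thm_nrho} start from an orthogonal pair or from commutativity, and neither is available here.

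\textbf{How the paper proceeds.} It avoids positivity and instead turns the vanishing of $\operatorname{Tr}(W)$ into progress. Let $G$ be the set of pairs $\{s,t\}$ with $\det(\boldsymbol{u}_s,\boldsymbol{u}_t)\neq 0$.

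If $G$ contains two disjoint pairs, then $\operatorname{Tr}(W)=0$ for every ordering of the remaining $n-4$ states. That sub-tuple is still real with $\operatorname{Tr}(\rho_k^2)>\frac12$, so for $n\geq 8$ induction gives an orthogonal pair. The cases $n=5,6,7$ are done directly in Lemma \ref{567imaginarity}. For $n=5$, $\operatorname{Tr}(W)=1$, so all determinant products vanish. For $n=6$, $\operatorname{Tr}(W)=0$ is already orthogonality. For $n=7$, one also places a letter between the commutators and uses $\sigma_{y}\rho\sigma_{y}=\mathbb{I}_{2}-\rho$ to force $\boldsymbol{r}_f\cdot\boldsymbol{r}_g=-1$.

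If no two pairs in $G$ are disjoint, then $G$ is empty or a star. So at least $n-1$ of the $\boldsymbol{u}_k$ are collinear, and Theorem \ref{thm_nrho} finishes.

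The same insertion trick shows your positivity idea can work in small cases. For $n=6$,
\[
\operatorname{Tr}([\rho_a,\rho_b][\rho_c,\rho_d]\rho_e\rho_f)+\operatorname{Tr}([\rho_a,\rho_b]\rho_e[\rho_c,\rho_d]\rho_f)=-\tfrac14\det(\boldsymbol{u}_a,\boldsymbol{u}_b)\det(\boldsymbol{u}_c,\boldsymbol{u}_d).
\]
For general $n$ you would need to prove that some such combination is strictly positive. The induction is what spares the paper from doing that.
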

\begin{proof}
    We prove Theorem \ref{nimaginarity} in Appendix \ref{Appendix nimaginarity}. 
\end{proof}

According to Proposition \ref{pp:rho1rho2rho3rho4}, Proposition \ref{thm_5rho}, Theorem \ref{thm_nrho} and Theorem \ref{nimaginarity}, we immediately obtain the following corollary.

\begin{corollary}\label{maincor}
    Let $\vec{\rho}=(\rho_{1},\cdots,\rho_{n})$ be an $n$-tuple of single qubit states where $\rho_{i}\in \mathcal{D}(\mathbb{C}^{2})$ and $n\geq 4$. Denote the Bargmann scenario $\mathcal{W}=\{(\tau(1),\tau(2),\cdots,\tau(n))|\tau\in S_{n}\enspace\mbox{and}\enspace\tau(1)=1\}$, where $S_{n}$ is the symmetric group on $\{1,2,\cdots,n\}$. Suppose that 
    \begin{equation*}
       \Delta_{w}(\vec{\rho})=0,\enspace\forall w\in\mathcal{W}.
    \end{equation*}
    and $\operatorname{Tr}(\rho_{k}^{2})>\frac{1}{2}$ for all $k\in\{1,2,\cdots,n\}$. Then the following statements are equivalent:\\
    (\rmnum{1}) The tuple $\vec{\rho}=(\rho_{1},\rho_{2},\cdots,\rho_{n})$ has imaginarity.\\
    (\rmnum{2}) For any $s\in\{1,2,\cdots,n\}$, the tuple $(\rho_{k})_{k\neq s}$ is coherent.\\
    (\rmnum{3}) For any $i,j\in\{1,2,\cdots,n\}$, ${\rm Tr}(\rho_{i}\rho_{j})>0$.
\end{corollary}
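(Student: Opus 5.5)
The plan is to obtain Corollary \ref{maincor} purely by chaining earlier results, taking contrapositives. First, I will check that the hypotheses of the corollary imply the hypotheses of each earlier result for the relevant $n$.

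For $n\geq 6$ the scenario is exactly that of Theorem \ref{thm_nrho}. For $n=5$, the scenario of Proposition \ref{thm_5rho} uses only $\sigma\in A_{5}$ with $\sigma(1)=1$, which is a subset of the $S_{5}$-words, so its vanishing condition holds. For $n=4$, the three words $(1,2,3,4)$, $(1,3,4,2)$, $(1,4,2,3)$ all have first letter $1$ and are permutations, so \eqref{key condition of pp:4qubit} holds and Proposition \ref{pp:rho1rho2rho3rho4} applies. Theorem \ref{nimaginarity} applies verbatim for all $n\geq 4$, since the purity assumption $\operatorname{Tr}(\rho_{k}^{2})>\frac{1}{2}$ is part of the hypotheses.

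Next I will use the qubit fact from Lemma \ref{pp:rho sigma}: $\operatorname{Tr}(\rho_{i}\rho_{j})\geq 0$, and $\operatorname{Tr}(\rho_{i}\rho_{j})=0$ if and only if $\rho_{i}\rho_{j}=0$. Hence condition (iii), that $\operatorname{Tr}(\rho_{i}\rho_{j})>0$ for all $i,j$, is exactly the negation of ``there exist $i,j$ with $\rho_{i}\rho_{j}=0$''. Here $i\neq j$ is implicit, since $\operatorname{Tr}(\rho_{i}^{2})>\frac{1}{2}>0$ anyway.

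The equivalences then follow.

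\textbf{(i)$\Leftrightarrow$(iii).} Take the contrapositive of Theorem \ref{nimaginarity}. The tuple is imaginarity-free if and only if some $\operatorname{Tr}(\rho_{i}\rho_{j})=0$. So it has imaginarity if and only if all these traces are nonzero, which by nonnegativity means all are positive.

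\textbf{(ii)$\Leftrightarrow$(iii).} Take the contrapositive of Proposition \ref{pp:rho1rho2rho3rho4}, Proposition \ref{thm_5rho} or Theorem \ref{thm_nrho}, according as $n=4$, $n=5$ or $n\geq 6$. There is no orthogonal pair $\rho_{i}\rho_{j}=0$ if and only if no $s$ makes $(\rho_{k})_{k\neq s}$ incoherent, that is, if and only if every such deleted tuple is coherent.

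The only step needing care is the reduction of the $S_{n}$ scenario to the smaller scenarios used when $n=4,5$. Since vanishing on a larger word set implies vanishing on a subset, this is immediate once the word sets are compared. I therefore expect no real obstacle beyond bookkeeping.
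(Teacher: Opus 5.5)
Your proposal is correct and follows the paper's own argument: the corollary is obtained by negating both sides of the equivalences in Proposition~\ref{pp:rho1rho2rho3rho4}, Proposition~\ref{thm_5rho}, Theorem~\ref{thm_nrho} and Theorem~\ref{nimaginarity}. Your added bookkeeping is exactly what the paper leaves implicit. This covers the inclusion of the $n=4$ and $n=5$ word sets in the $S_n$ scenario. It also covers the fact (Lemma~\ref{pp:rho sigma}) that $\operatorname{Tr}(\rho_i\rho_j)\geq 0$, with equality iff $\rho_i\rho_j=0$, which ties (iii) to the orthogonality condition and upgrades ``nonzero'' to ``positive''.
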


This result depicts the connection between set imaginarity and set coherence of qubit states under vanishing Bargmann invariants.

\section{Conclusions and discussions}\label{sec5}
We have studied the connection between set coherence and set imaginarity for an $n$-tuple of qubit states when the Bargmann invariants of order $n$ vanish, where $n\geq 4$. Actually, we discussed the relationship between set coherence, set imaginarity, and orthogonal states. On the one hand, we show that if some Bargmann invariants of order $n$ vanish, then the existence of a pair of orthogonal states is equivalent to the existence of a state whose removal makes the tuple of remaining states incoherent. On the other hand, assuming that the tuple does not contain the maximally mixed state, we prove that the same vanishing conditions imply that a tuple is imaginarity-free if and only if it contains a pair of orthogonal states. According to these results, we obtain that, when some higher-order Bargmann invariants vanish, a tuple of single-qubit states has imaginarity if and only if every tuple obtained by deleting one of its states is coherent. These results can also be used to witness set coherence and set imaginarity of qubit states under specific conditions. Since Bargmann invariants can be estimated directly, this method can be experimentally accessed without state tomography.\par 

We expect that the vanishing conditions of Theorem \ref{thm_nrho} and Theorem \ref{nimaginarity} can be further weakened. In view of Propositions \ref{pp:rho1rho2rho3rho4}, \ref{thm_5rho} and \ref{4imaginarity}, we conjecture that it may be possible to obtain a feasible weakening of these vanishing conditions based on the Bargmann scenario associated with the alternating group $A_{n}$. In addition, it would be worthwhile to explore whether higher-order Bargmann invariants can lead to an analogous equivalence between set coherence and set imaginarity in high-dimensional systems. We believe that such research can help us better understand set coherence and set imaginarity.\par

\bigskip 
    \noindent{\bf Acknowledgments} 
    This work is supported by the National Natural Science Foundation of China (NSFC) under Grant No.12471427. The authors used Kimi K3 to improve the proof of Theorem \ref{thm_nrho}, and to assist with language polishing. All mathematical statements and proofs were independently verified by the authors, who take responsibility for the final manuscript.


\begin{appendices}
\section{The proof of Lemma \ref{pp:rho sigma}}\label{Appendix pp:rho sigma}
{\bf The proof of Lemma \ref{pp:rho sigma}}: Choosing a basis $\ket{0},\ket{1}$ in which 
\begin{equation}\label{eq1 for pp:rho sigma}
    \rho=\begin{pmatrix} p & 0 \\ 0 & 1-p \end{pmatrix}, \sigma=\begin{pmatrix} q & \beta \\ \overline{\beta} & 1-q \end{pmatrix},
\end{equation}
where $p,q\in [0,1]$, $\beta\in\mathbb{C}$ and $|\beta|^{2}\leq q(1-q)$. Assume that $w=(l_{1},\cdots,l_n)$ and $\Delta_{w}(\rho,\sigma)=\operatorname{Tr}(\rho_{l_{1}}\cdots\rho_{l_{n}})$, where $\rho_{l_{k}}\in\{\rho,\sigma\}$. Note that 
    \begin{equation}\label{eq2 for pp:rho sigma}
    \Delta_{w}(\rho,\sigma)=\operatorname{Tr}(\rho_{l_{1}}\cdots\rho_{l_{n}})=\sum\limits_{j_{1},j_{2},\cdots,j_{n}\in\{0,1\}}(\rho_{l_{1}})_{j_{1},j_{2}}(\rho_{l_{2}})_{j_{2},j_{3}}\cdots(\rho_{l_{n}})_{j_{n},j_{1}}.
    \end{equation}
Eq.(\ref{eq1 for pp:rho sigma}) implies that $(\rho)_{01}=(\rho)_{10}=0$ and $(\sigma)_{01}=\overline{(\sigma)_{10}}$. Then
$(\rho_{l_{1}})_{j_{1},j_{2}}(\rho_{l_{2}})_{j_{2},j_{3}}\cdots(\rho_{l_{n}})_{j_{n},j_{1}}\geq 0$ for all $j_{1},j_{2},\cdots,j_{n}\in\{0,1\}$, as $01$ and $10$ appear in pairs. Hence $\Delta_{w}(\rho,\sigma)\geq 0$. \par
Suppose that $\rho\sigma\neq 0$. We show that $\Delta_{w}(\rho,\sigma)\neq 0$ for all $w\in\mathcal{W}$. Since $\rho,\sigma$ are density operators, we have $\operatorname{Tr}(\rho^{n})>0$ and $\operatorname{Tr}(\sigma^{n})>0$ for all $n\in\mathbb{Z}^{+}$, which implies that $\Delta_{w}(\rho,\sigma)>0$ when $w$ consists of a single repeated letter. Consider the case that $w=(l_{1},\cdots,l_n)$ contains two different letters. We claim that there exists $i\in\{0,1\}$ such that $\braket{i|\rho|i}>0$ and $\braket{i|\sigma|i}>0$. If $q\in (0,1)$, we just need to take $i=0$ when $p\neq 0$ and $i=1$ when $p=0$. If $q=0$ or $q=1$, then $\beta=0$ from $|\beta|^{2}\leq q(1-q)$. From $\rho\sigma\neq 0$, we just need to take $i=1$ when $q=0$ and take $i=0$ when $q=1$. Hence the claim is proved. Therefore, we have $(\rho_{l_{1}})_{ii}\cdots(\rho_{l_{n}})_{ii}>0$. Then $\Delta_{w}(\rho,\sigma)>0$ from Eq.(\ref{eq2 for pp:rho sigma}) and $(\rho_{l_{1}})_{j_{1},j_{2}}(\rho_{l_{2}})_{j_{2},j_{3}}\cdots(\rho_{l_{n}})_{j_{n},j_{1}}\geq 0$ for all $j_{1},j_{2},\cdots,j_{n}\in\{0,1\}$. On the contrary, we can see that $\Delta_{12}(\rho,\sigma)=\operatorname{Tr}(\rho\sigma)=0$ when $\rho\sigma=0$. \qed

\section{The proof of Theorem \ref{thm_nrho}}\label{Appendix thm_nrho}
To prove Theorem \ref{thm_nrho} using induction method, we need to first prove the following lemma.
\begin{lemma}\label{thm_6rho}
    Let $\rho_{1},\rho_{2},\rho_{3},\rho_{4},\rho_{5},\rho_{6}$ be six density operators on $\mathbb{C}^{2}$. Denote the Bargmann scenario $\mathcal{W}=\{(\sigma(1),\sigma(2),\sigma(3),\sigma(4),\sigma(5),\sigma(6))|\sigma\in S_{6}\enspace\mbox{and}\enspace\sigma(1)=1\}$, where $S_{6}$ is the symmetric group on $\{1,2,3,4,5,6\}$. Suppose that 
    \begin{equation}\label{key condition of pp:6qubit}
        \Delta_{w}(\rho_{1},\rho_{2},\rho_{3},\rho_{4},\rho_{5},\rho_{6})=0,\enspace \forall w\in\mathcal{W}.
    \end{equation}
    If there exists some $i,j\in\{1,2,3,4,5,6\}$ such that $\rho_{i}\rho_{j}=0$, then there exists some $s\in\{1,2,3,4,5,6\}$ such that $[\rho_{k},\rho_{l}]=0$ for all $k,l\in\{1,2,3,4,5,6\}\backslash\{s\}$.
\end{lemma}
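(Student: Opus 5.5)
The plan is to follow the strategy of the (\rmnum{1})$\Rightarrow$(\rmnum{2}) part of Proposition \ref{thm_5rho}. Since $\rho_{i}\rho_{j}=0$, choose a basis in which $\rho_{i}=\ket{0}\bra{0}$ and $\rho_{j}=\ket{1}\bra{1}$. Write the other four states, indexed by $\{k_{1},\dots,k_{4}\}=\{1,\dots,6\}\backslash\{i,j\}$, as $\rho_{k_{l}}=\begin{pmatrix} p_{l} & \alpha_{l}\\ \overline{\alpha_{l}} & 1-p_{l}\end{pmatrix}$ with $|\alpha_{l}|^{2}\leq p_{l}(1-p_{l})$. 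It suffices to show that at most one of $\alpha_{1},\dots,\alpha_{4}$ is nonzero. Then the remaining five states, including $\rho_{i}$ and $\rho_{j}$, are simultaneously diagonal, and we may take $s$ to be the index with $\alpha_{s}\neq 0$ (or any index if none is nonzero).

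Since the scenario is the full $S_{6}$, condition \eqref{key condition of pp:6qubit} together with cyclicity says that $\operatorname{Tr}$ of every ordering of the six states vanishes. I will use two families of words.

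\emph{Family A} consists of the words $\rho_{i}\rho_{a}\rho_{j}\rho_{b}\rho_{c}\rho_{d}$, with $\{a,b,c,d\}=\{k_{1},\dots,k_{4}\}$. Since $\rho_{i}$ and $\rho_{j}$ are rank-one projectors, these give
\[
\alpha_{a}\,(\rho_{b}\rho_{c}\rho_{d})_{10}=0 .
\]
This holds for every ordering $(b,c,d)$ of the other three indices.

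\emph{Family B} consists of the words $\rho_{i}\rho_{a}\rho_{b}\rho_{j}\rho_{c}\rho_{d}$, which give
\[
(\rho_{a}\rho_{b})_{01}\,(\rho_{c}\rho_{d})_{10}=0,\qquad (\rho_{a}\rho_{b})_{01}=p_{a}\alpha_{b}+(1-p_{b})\alpha_{a}.
\]

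Expanding the $(1,0)$ entry of a triple product by paths $1\to x\to y\to 0$ gives
\[
(\rho_{b}\rho_{c}\rho_{d})_{10}=\overline{\alpha_{b}}p_{c}p_{d}+(1-p_{b})\overline{\alpha_{c}}p_{d}+(1-p_{b})(1-p_{c})\overline{\alpha_{d}}+\overline{\alpha_{b}}\alpha_{c}\overline{\alpha_{d}}.
\]

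Now suppose, for contradiction, that at least two of the $\alpha_{l}$ are nonzero, say $\alpha_{a}\neq0$. Then Family A forces $(\rho_{b}\rho_{c}\rho_{d})_{10}=0$ for all six orderings of $b,c,d$. Every nonzero $\alpha_{l}$ also forces $p_{l}\in(0,1)$. I split into cases by the number $m\in\{2,3,4\}$ of nonzero $\alpha_{l}$.

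\emph{Case $m=2$.} Say $\alpha_{b}\neq0$ and $\alpha_{c}=\alpha_{d}=0$. The six equations reduce to $\overline{\alpha_{b}}$ times $p_{c}p_{d}$, $(1-p_{c})p_{d}$, $p_{c}(1-p_{d})$, $(1-p_{c})(1-p_{d})$ and their variants. These quantities cannot all vanish, since $p_{d}=0$ and $p_{c}=1$ is then forced, but the ordering $d,b,c$ gives $(1-p_{d})p_{c}\overline{\alpha_{b}}=\overline{\alpha_{b}}\neq0$.

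\emph{Cases $m=3$ and $m=4$.} I would imitate the determinant trick of Proposition \ref{thm_5rho}. The vanishing of $(\rho_{b}\rho_{c}\rho_{d})_{10}$ and $(\rho_{c}\rho_{b}\rho_{d})_{10}$, and similar pairs, gives linear relations among the $\overline{\alpha}$'s with coefficients built from the $p$'s. Subtracting the equations for orderings that differ by a transposition should cancel the cubic term $\overline{\alpha_{b}}\alpha_{c}\overline{\alpha_{d}}$ whenever two of $b,c,d$ have vanishing $\alpha$. In the genuinely cubic situation, I would instead combine these relations with the Family B equations. Those force, for each pair $\{c,d\}$ with $(\rho_{c}\rho_{d})_{10}\neq0$, the linear condition $p_{a}\alpha_{b}+(1-p_{b})\alpha_{a}=0$ on the complementary pair, for both orders of $a,b$. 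The two orders yield a $2\times2$ system for $(\alpha_{a},\alpha_{b})$ with determinant $p_{a}p_{b}-(1-p_{a})(1-p_{b})$. Unless this determinant is zero, the system forces $\alpha_{a}=\alpha_{b}=0$; the degenerate situation $p_{a}+p_{b}=1$ is then treated separately.

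The main obstacle is this last step, the $m=3,4$ configurations where every $(\rho_{c}\rho_{d})_{10}$ might vanish or the determinants degenerate. There I expect to need a careful elimination, using the full symmetric-group freedom, to reach a positive-sum contradiction of the form $\prod p_{l}+\prod(1-p_{l})=0$ with all $p_{l}\in(0,1)$, exactly as in Proposition \ref{thm_5rho}.
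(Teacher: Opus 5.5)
Your reduction is correct: with $\rho_i=\ket{0}\bra{0}$ and $\rho_j=\ket{1}\bra{1}$, it suffices to show that at most one $\alpha_l$ is nonzero. Your two families of trace identities, the path expansion of $(\rho_b\rho_c\rho_d)_{10}$ and the case $m=2$ are also correct. The cases $m=3$ and $m=4$ carry all the difficulty of the lemma, and they are not proved. You describe what you would try and leave the decisive step open. The contradiction you anticipate, an identity like $\prod p_l+\prod(1-p_l)=0$ as in Proposition~\ref{thm_5rho}, is not what closes the argument. So there is a genuine gap.

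\emph{The missing idea.} Use your Family B remark with all orientations. For each of the three perfect matchings $\{\{a,b\},\{c,d\}\}$ of $\{k_1,\dots,k_4\}$, one of the two pairs, say $\{a,b\}$, must satisfy $(\rho_a\rho_b)_{10}=(\rho_b\rho_a)_{10}=0$.
\begin{itemize}
\item If exactly one of $\alpha_a,\alpha_b$ vanishes, this is impossible. For $\alpha_b=0\neq\alpha_a$ it would need $p_b=0$ and $p_b=1$.
\item If both are nonzero, your $2\times2$ determinant must vanish. So the ``degenerate situation'' you set aside is in fact the only possible one. It gives $p_b=1-p_a$ and $\alpha_b=-\alpha_a$, i.e.\ $\rho_b=\mathbb{I}-\rho_a$; the paper phrases this as $\alpha_a/\alpha_b$ being a negative real.
\item No three indices can be pairwise related in this way, since that forces some $\rho=\mathbb{I}/2$, hence some $\alpha=0$. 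In the paper's words, three negative ratios cannot multiply to $1$.
\end{itemize}

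For $m=3$, pairs containing the index with $\alpha=0$ never qualify. So all three pairs among the nonzero indices must qualify, which is impossible.

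For $m=4$, the qualifying pairs form a triangle-free set meeting all three matchings, and such a set is a star. This gives $\rho_b=\rho_c=\rho_d=\mathbb{I}-\rho_a$ for some centre $a$. This configuration satisfies every Family B equation, so Family B alone cannot finish. You must return to Family A: $\operatorname{Tr}(\rho_i\rho_a\rho_j\rho_b\rho_c\rho_d)=\alpha_a\bigl((\mathbb{I}-\rho_a)^3\bigr)_{10}=-|\alpha_a|^2\bigl(1-p_a+p_a^2+|\alpha_a|^2\bigr)<0$, a contradiction.

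This is essentially the paper's route: its set $U$, the no-triangle claim, the forced star, and a final strict-sign Family A evaluation. Some structural argument of this kind has to replace your unspecified ``careful elimination''.
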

\begin{proof}
    Assume that $\{1,2,3,4,5,6\}\backslash\{i,j\}=\{k_{1},k_{2},k_{3},k_{4}\}$. Due to $\rho_{i}\rho_{j}=0$, we can choose a basis $\ket{0},\ket{1}$ in which 
    \begin{equation}
    \rho_{i}=\begin{pmatrix} 1 & 0 \\ 0 & 0 \end{pmatrix},
    \rho_{j}=\begin{pmatrix} 0 & 0 \\ 0 & 1 \end{pmatrix},
    \rho_{k_{l}}=\begin{pmatrix} p_{l} & \alpha_{l}\\ \overline{\alpha_{l}} & 1-p_{l} \end{pmatrix}, l=1,2,3,4
    \end{equation}
where $p_{l}\in [0,1]$, $\alpha_{l}\in \mathbb{C}$ and $|\alpha_{l}|^{2}\leq p_{l}(1-p_{l})$. Clearly, $p_{l}\in (0,1)$ when $\alpha_{l}\neq 0$. Due to cyclicity of the trace, we see that \eqref{key condition of pp:6qubit} is equivalent to ${\rm Tr}(\rho_{\sigma(1)}\rho_{\sigma(2)}\rho_{\sigma(3)}\rho_{\sigma(4)}\rho_{\sigma(5)}\rho_{\sigma(6)})=0$ for all $\sigma\in S_{6}$. We define 
\begin{eqnarray}\label{eq:X st}    
X_{st}=\overline{\alpha_{s}}p_{t}+(1-p_{s})\overline{\alpha_{t}}.
\end{eqnarray}
Then we have
\begin{equation}\label{keyeq:0=Tr rho6}
\begin{cases}
0={\rm Tr}(\rho_{i}\rho_{k_{1}}\rho_{k_{2}}\rho_{j}\rho_{k_{4}}\rho_{k_{3}})=\overline{X_{21}}X_{43},\\
0={\rm Tr}(\rho_{i}\rho_{k_{2}}\rho_{k_{1}}\rho_{j}\rho_{k_{3}}\rho_{k_{4}})=\overline{X_{12}}X_{34},\\
0={\rm Tr}(\rho_{i}\rho_{k_{2}}\rho_{k_{3}}\rho_{j}\rho_{k_{4}}\rho_{k_{1}})=\overline{X_{32}}X_{41},\\
0={\rm Tr}(\rho_{i}\rho_{k_{3}}\rho_{k_{1}}\rho_{j}\rho_{k_{4}}\rho_{k_{2}})=\overline{X_{13}}X_{42},\\
0={\rm Tr}(\rho_{i}\rho_{k_{4}}\rho_{k_{1}}\rho_{j}\rho_{k_{2}}\rho_{k_{3}})=\overline{X_{14}}X_{23},\\
0={\rm Tr}(\rho_{i}\rho_{k_{4}}\rho_{k_{2}}\rho_{j}\rho_{k_{3}}\rho_{k_{1}})=\overline{X_{24}}X_{31},\\
0={\rm Tr}(\rho_{i}\rho_{k_{1}}\rho_{j}\rho_{k_{2}}\rho_{k_{3}}\rho_{k_{4}})=\alpha_{1}[\overline{\alpha_{2}}(p_{3}p_{4}+\alpha_{3}\overline{\alpha_{4}})+(1-p_{2})X_{34}],\\
0={\rm Tr}(\rho_{i}\rho_{k_{2}}\rho_{j}\rho_{k_{1}}\rho_{k_{4}}\rho_{k_{3}})=\alpha_{2}[\overline{\alpha_{1}}(p_{4}p_{3}+\alpha_{4}\overline{\alpha_{3}})+(1-p_{1})X_{43}],\\
0={\rm Tr}(\rho_{i}\rho_{k_{3}}\rho_{j}\rho_{k_{2}}\rho_{k_{4}}\rho_{k_{1}})=\alpha_{3}[\overline{\alpha_{2}}(p_{4}p_{1}+\alpha_{4}\overline{\alpha_{1}})+(1-p_{2})X_{41}],\\
0={\rm Tr}(\rho_{i}\rho_{k_{4}}\rho_{j}\rho_{k_{2}}\rho_{k_{1}}\rho_{k_{3}})=\alpha_{4}[\overline{\alpha_{2}}(p_{1}p_{3}+\alpha_{1}\overline{\alpha_{3}})+(1-p_{2})X_{13}].\\
\end{cases}
\end{equation}
We next show that at most one of $\alpha_{1},\alpha_{2},\alpha_{3},\alpha_{4}$ is non-zero. For this purpose, three cases \Rmnum{1}-\Rmnum{3} are distinguished according to how many of $\alpha_{1},\alpha_{2},\alpha_{3},\alpha_{4}$ are non-zero. In case \Rmnum{1}, we will define a set $U$ and obtain contradictions by studying its structure.\par

\noindent
{\bf Case \Rmnum{1}:} Consider the case where $\alpha_{l}\neq 0$ for all $l\in\{1,2,3,4\}$. We define 
\begin{equation}\label{def_U}
    U=\{\{s,t\}|\overline{X_{st}}=0\enspace\mbox{or}\enspace\overline{X_{ts}}=0\}
\end{equation}
by \eqref{eq:X st}. For any $\{s,t\}\in U$, we have $\frac{\alpha_{s}}{\alpha_{t}}\in (-\infty,0)$ because $\frac{\alpha_{s}}{\alpha_{t}}=-\frac{1-p_{s}}{p_{t}}$ when $\overline{X_{st}}=0$ and $\frac{\alpha_{s}}{\alpha_{t}}=-\frac{p_{s}}{1-p_{t}}$ when $\overline{X_{ts}}=0$.
We claim that there are no three distinct numbers $a,b,c\in\{1,2,3,4\}$ such that $\{\{a,b\},\{b,c\},\{c,a\}\}\subseteq U$. Indeed, if such $a,b,c$ existed, then $\frac{\alpha_{a}}{\alpha_{b}},\frac{\alpha_{b}}{\alpha_{c}},\frac{\alpha_{c}}{\alpha_{a}}\in (-\infty,0)$, which would force their product to be negative, contradicting the fact that $\frac{\alpha_{a}}{\alpha_{b}}\frac{\alpha_{b}}{\alpha_{c}}\frac{\alpha_{c}}{\alpha_{a}}=1>0$. Thus the claim is proved. Let $M_{1}=\{\{1,2\},\{3,4\}\}$, $M_{2}=\{\{1,4\},\{2,3\}\}$, $M_{3}=\{\{1,3\},\{2,4\}\}$. From \eqref{keyeq:0=Tr rho6}, we can see that 
\begin{eqnarray}\label{eq:U cap Mr}    
U\cap M_{r}\neq\varnothing
\end{eqnarray}
for all $r\in\{1,2,3\}$. 
If $M_{n}\subseteq U$ for some $n\in\{1,2,3\}$, then there would exist three distinct numbers $a,b,c\in\{1,2,3,4\}$ such that $\{\{a,b\},\{b,c\},\{c,a\}\}\subseteq U$, contradicting the previous claim. 
Hence $M_{r}\not\subseteq U$  for all $r\in\{1,2,3\}$. 
Combining with \eqref{eq:U cap Mr} and the claim above \eqref{eq:U cap Mr}, 
we obtain that there exists some $a\in\{1,2,3,4\}$ such that $U=\{\{a,x\}|x\in\{1,2,3,4\}\backslash\{a\}\}$. Assume without loss of generality that $U=\{\{1,2\},\{1,3\},\{1,4\}\}$. Combining this with \eqref{def_U}, we see that $\overline{X_{34}}, \overline{X_{43}},\overline{X_{23}},\overline{X_{32}},\overline{X_{24}},\overline{X_{42}}$ are all non-zero. Then according to \eqref{keyeq:0=Tr rho6}, we obtain
\begin{equation*}
    \begin{cases}
    \alpha_{1}p_{t}+(1-p_{1})\alpha_{t}=\overline{X_{1t}}=0,\\
    \alpha_{t}p_{1}+(1-p_{t})\alpha_{1}=\overline{X_{t1}}=0,\\
    \end{cases}
\end{equation*}
for all $t\in\{2,3,4\}$. So we have $\alpha_{1}p_{t}-\frac{(1-p_{1})(1-p_{t})}{p_{1}}\alpha_{1}=0$ (Note that $p_1\neq 0$ from $\alpha_{1}\neq 0$), which is equivalent to $p_{1}p_{t}=1-p_{1}-p_{t}+p_{1}p_{t}$. Consequently, $p_{t}=1-p_{1}$ and $\alpha_{t}=-\alpha_{1}$ for all $t\in\{2,3,4\}$. Using 
\eqref{eq:rho_i}, we have 
\begin{equation*}
    \begin{split}
    {\rm Tr}(\rho_{i}\rho_{k_{1}}\rho_{j}\rho_{k_{2}}\rho_{k_{3}}\rho_{k_{4}})
    & =\alpha_{1}[-\overline{\alpha_{1}}(1-2p_{1}+p_{1}^{2}+|\alpha_{1}|^{2})-p_{1}\overline{\alpha_{1}}]\\
    & =\alpha_{1}[-\overline{\alpha_{1}}+p_{1}\overline{\alpha_{1}}-p_{1}^{2}\overline{\alpha_{1}}-|\alpha_{1}|^{2}\overline{\alpha_{1}}]\\
    & =-[(1-p_{1}+p_{1}^{2})|\alpha_{1}|^{2}+|\alpha_{1}|^{4}]<0, 
    \end{split}  
\end{equation*}
contradicting ${\rm Tr}(\rho_{i}\rho_{k_{1}}\rho_{j}\rho_{k_{2}}\rho_{k_{3}}\rho_{k_{4}})=0$ in \eqref{keyeq:0=Tr rho6}.\par

\noindent
{\bf Case \Rmnum{2}:} We consider the case that three elements of $\{\alpha_{1},\alpha_{2},\alpha_{3},\alpha_{4}\}$ are non-zero. Assume without loss of generality that $\alpha_{1}\neq 0$, $\alpha_{2}\neq 0$, $\alpha_{3}\neq 0$, and $\alpha_{4}=0$. We claim that $0=\overline{X_{12}}=\overline{X_{23}}=\overline{X_{31}}$ or $0=\overline{X_{21}}=\overline{X_{32}}=\overline{X_{13}}$. Due to $p_{4}\in [0,1]$, the proof of this claim can split into three cases. If $p_{4}\in (0,1)$, then according to $\alpha_{4}=0$ and \eqref{eq:X st}, we have $X_{t4}=\overline{\alpha_{t}}p_{4}\neq 0$ and $X_{4t}=(1-p_{4})\overline{\alpha_{t}}\neq 0$ for all $t\in\{1,2,3\}$. Combining this with \eqref{keyeq:0=Tr rho6}, we obtain $0=\overline{X_{12}}=\overline{X_{23}}=\overline{X_{31}}$ and $0=\overline{X_{21}}=\overline{X_{32}}=\overline{X_{13}}$. If $p_{4}=0$, then $X_{4t}=\overline{\alpha_{t}}\neq 0$. Combining this with \eqref{keyeq:0=Tr rho6}, we obtain $0=\overline{X_{12}}=\overline{X_{23}}=\overline{X_{31}}$. If $p_{4}=1$, then $X_{t4}=\overline{\alpha_{s}}\neq 0$. Combining this with \eqref{keyeq:0=Tr rho6}, we obtain $0=\overline{X_{21}}=\overline{X_{32}}=\overline{X_{13}}$. Hence the claim is proved. Therefore, according to \eqref{eq:X st}, we have $\frac{\alpha_{1}}{\alpha_{2}},\frac{\alpha_{2}}{\alpha_{3}},\frac{\alpha_{3}}{\alpha_{1}}\in (-\infty,0)$ (Note that $p_{t}\in (0,1)$ from $\alpha_{t}\neq 0$), which would force their product to be negative, contradicting the fact that $\frac{\alpha_{1}}{\alpha_{2}}\frac{\alpha_{2}}{\alpha_{3}}\frac{\alpha_{3}}{\alpha_{1}}=1>0$.\par

\noindent
{\bf Case \Rmnum{3}:} 
We consider the case that two elements of $\{\alpha_{1},\alpha_{2},\alpha_{3},\alpha_{4}\}$ are non-zero. Assume without loss of generality that $\alpha_{1}\neq 0$, $\alpha_{2}\neq 0$, $\alpha_{3}=0$, and $\alpha_{4}=0$. Then we can see that
\begin{equation*}
\begin{cases}
0={\rm Tr}(\rho_{i}\rho_{k_{2}}\rho_{k_{3}}\rho_{j}\rho_{k_{4}}\rho_{k_{1}})=\overline{X_{32}}X_{41}=(1-p_{3})\alpha_{2}(1-p_{4})\overline{\alpha_{1}},\\
0={\rm Tr}(\rho_{i}\rho_{k_{3}}\rho_{k_{1}}\rho_{j}\rho_{k_{4}}\rho_{k_{2}})=\overline{X_{13}}X_{42}=\alpha_{1}p_{3}(1-p_{4})\overline{\alpha_{2}},\\
0={\rm Tr}(\rho_{i}\rho_{k_{4}}\rho_{k_{1}}\rho_{j}\rho_{k_{2}}\rho_{k_{3}})=\overline{X_{14}}X_{23}=\alpha_{1}p_{4}\overline{\alpha_{2}}p_{3},\\
0={\rm Tr}(\rho_{i}\rho_{k_{4}}\rho_{k_{2}}\rho_{j}\rho_{k_{3}}\rho_{k_{1}})=\overline{X_{24}}X_{31}=\alpha_{2}p_{4}(1-p_{3})\overline{\alpha_{1}},\\
\end{cases}
\end{equation*}
Therefore, we have
\begin{equation*}
\begin{cases}
0=(1-p_{3})(1-p_{4}),\\
0=p_{3}(1-p_{4}),\\
0=p_{4}p_{3},\\
0=p_{4}(1-p_{3}),
\end{cases}
\end{equation*}
which is impossible. \par

Based on the above cases \Rmnum{1}-\Rmnum{3}, we obtain that there exist some $s_{1},s_{2},s_{3}\in\{1,2,3,4\}$ satisfying $\alpha_{s_{1}}=\alpha_{s_{2}}=\alpha_{s_{3}}=0$. As a result, $[\rho_{k},\rho_{l}]=0$ for all $k,l\in\{i,j,k_{s_{1}},k_{s_{2}},k_{s_{3}}\}$ by \eqref{eq:rho_i}.
\end{proof}

    Now, we can prove Theorem \ref{thm_nrho} using induction method.

{\bf The proof of Theorem \ref{thm_nrho}}: (\rmnum{1})$\Rightarrow$(\rmnum{2}). Assume that $\{1,2,\cdots,n\}\backslash\{i,j\}=\{k_{1},k_{2},\cdots,k_{n-2}\}$. Due to $\rho_{i}\rho_{j}=0$, we can choose a basis $\ket{0},\ket{1}$ in which 
    \begin{equation}\label{eq:nrho_i}
    \rho_{i}=\begin{pmatrix} 1 & 0 \\ 0 & 0 \end{pmatrix},
    \rho_{j}=\begin{pmatrix} 0 & 0 \\ 0 & 1 \end{pmatrix},
    \rho_{k_{l}}=\begin{pmatrix} p_{l} & \alpha_{l}\\ \overline{\alpha_{l}} & 1-p_{l} \end{pmatrix}, l=1,2,\cdots,n-2
    \end{equation}
    where $p_{l}\in [0,1]$, $\alpha_{l}\in \mathbb{C}$ and $|\alpha_{l}|^{2}\leq p_{l}(1-p_{l})$. Due to cyclicity of the trace, we see that \eqref{key condition of pp:nqubit} is equivalent to ${\rm Tr}(\rho_{\sigma(1)}\rho_{\sigma(2)}\cdots\rho_{\sigma(n)})=0$ for all $\sigma\in S_{n}$.\par
    In Lemma \ref{thm_6rho}, we have proven the claim (\rmnum{1}) $\Rightarrow$ (\rmnum{2}) for $n=6$. To apply the induction on $n$, we assume that the claim is true when $n=m-1$. We need prove the claim for $n=m$. For any $l\in\{1,2,\cdots,m-2\}$ and $A,B\in\mathcal{M}_{2}(\mathbb{C})$, we have
    \begin{equation}
    \label{eq:Tr rhoi}
        \begin{split}
        {\rm Tr}(\rho_{i}A\rho_{j}\rho_{k_{l}}B) &={\rm Tr}(\ket{0}\bra{0}A\ket{1}\bra{1}\rho_{k_{l}}B)=\braket{0|A|1}{\rm Tr}(\ket{0}\bra{1}\rho_{k_{l}}B)\\
        & =A_{01}\braket{1|\rho_{k_{l}}B|0}=\overline{\alpha_{l}}A_{01}B_{00}+A_{01}(1-p_{l})B_{10}\\
        & =\overline{\alpha_{l}}A_{01}{\rm Tr}(\rho_{i}B)+(1-p_{l})A_{01}\braket{1|B|0}\\
        & =\overline{\alpha_{l}}A_{01}{\rm Tr}(\rho_{i}B)+(1-p_{l}){\rm Tr}(\rho_{i}A\rho_{j}B).
        \end{split}
    \end{equation}
    To prove (\rmnum{1})$\Rightarrow$(\rmnum{2}), we distinguish three cases \Rmnum{1}–\Rmnum{3} according to whether all $\alpha_{l}$ in \eqref{eq:nrho_i} are non-zero. Specifically, in cases \Rmnum{1} and \Rmnum{2}, we address the situation that there exists some $s\in\{1,2,\cdots,m-2\}$ satisfying $\alpha_{s}=0$. In case \Rmnum{3}, we treat the situation where $\alpha_{l}\neq 0$ for all $l\in\{1,2,\cdots,m-2\}$ and resolve it by using case \Rmnum{1} and case \Rmnum{2}. \par
    
    \noindent
    {\bf Case \Rmnum{1}}: Consider the case that there exists some $s\in\{1,2,\cdots,m-2\}$ such that $\alpha_{s}=0$ and $p_{s}\in [0,1)$. According to \eqref{key condition of pp:nqubit} and \eqref{eq:Tr rhoi}, for any two sets $I,J$ that satisfy $I\cup J=\{1,2,\cdots,m-2\}\backslash\{s\}$ and $I\cap J=\varnothing$, we have 
    \begin{equation}
        0={\rm Tr}(\rho_{i}A\rho_{j}\rho_{k_{s}}B)=(1-p_{s}){\rm Tr}(\rho_{i}A\rho_{j}B),
    \end{equation}
    where $A=\prod\limits_{l\in I}\rho_{k_{l}}$, $B=\prod\limits_{l\in J}\rho_{k_{l}}$ (Note: The order of operators in these two products is arbitrary). Combining this with $p_{s}\in [0,1)$, we have ${\rm Tr}(\rho_{i}A\rho_{j}B)=0$. Removing $\rho_{k_{s}}$ from $\vec{\rho}$, we obtain $m-1$ states that still contain $\rho_{i}$ and $\rho_{j}$ with $\rho_{i}\rho_{j}=0$. Since ${\rm Tr}(\rho_{i}A\rho_{j}B)=0$ for all such $A,B$, the trace of any ordered product of these $m-1$ states vanishes. Therefore, applying induction hypothesis, there exists some $t\in\{1,2,\cdots,m\}\backslash\{k_{s}\}$ such that $[\rho_{k},\rho_{l}]=0$ for all $k,l\in\{1,2,\cdots,m\}\backslash\{k_{s},t\}$. Combining this with $k_{s}\not\in\{i,j\}$, we obtain $[\rho_{i},\rho_{k_{l}}]=0$ or $[\rho_{k_{l}},\rho_{j}]=0$ for all $l$ satisfying $k_{l}\notin\{k_{s},t\}$. By calculating directly from \eqref{eq:nrho_i} and the definition of commutator, we have 
    \begin{equation}
        [\rho_{i},\rho_{k_{l}}]=[\rho_{k_{l}},\rho_{j}]=\begin{pmatrix} 0 & \alpha_{l}\\ -\overline{\alpha_{l}} & 0 \end{pmatrix}.
    \end{equation}
    Thus, $\alpha_{l}=0$ for all $l$ satisfying $k_{l}\notin\{k_{s},t\}$. Combining this with $\alpha_{s}=0$, we obtain that at most one of $\alpha_{1},\alpha_{2},\cdots,\alpha_{m-2}$ is non-zero. Hence there exists some $s\in\{1,2,\cdots,m\}$ such that $[\rho_{k},\rho_{l}]=0$ for all $k,l\in\{1,2,\cdots,m\}\backslash\{s\}$. \par

    \noindent
    {\bf Case \Rmnum{2}}: Consider the case that there exists some $s\in\{1,2,\cdots,m-2\}$ such that $\alpha_{s}=0$ and $p_{s}=1$. Based on \eqref{eq:nrho_i}, we have $\rho_{k_{s}}=\rho_{i}$ (Note: Please remember $k_{s}\notin\{i,j\}$). So $\rho_{i}\rho_{k_{s}}=\rho_{i}^{2}=\rho_{i}$. According to \eqref{key condition of pp:nqubit}, for any two sets $I,J$ that satisfy $I\cup J=\{1,2,\cdots,m-2\}\backslash\{s\}$ and $I\cap J=\varnothing$, we have 
    \begin{equation}
        0={\rm Tr}(\rho_{i}\rho_{k_{s}}A\rho_{j}B)={\rm Tr}(\rho_{i}A\rho_{j}B)
    \end{equation}
    where $A=\prod\limits_{l\in I}\rho_{k_{l}}$, $B=\prod\limits_{l\in J}\rho_{k_{l}}$. Removing $\rho_{k_{s}}$ from $\vec{\rho}$, we obtain $m-1$ states that still contain $\rho_{i}$ and $\rho_{j}$ with $\rho_{i}\rho_{j}=0$. Since ${\rm Tr}(\rho_{i}A\rho_{j}B)=0$ for all such $A,B$, the trace of any ordered product of these $m-1$ states vanishes. Therefore, applying induction hypothesis, there exists some $t\in\{1,2,\cdots,m\}\backslash\{k_{s}\}$ such that $[\rho_{k},\rho_{l}]=0$ for all $k,l\in\{1,2,\cdots,m\}\backslash\{k_{s},t\}$. 
    Combining this with $[\rho_{i},\rho_{k_{l}}]=[\rho_{k_{l}},\rho_{j}]=\begin{pmatrix} 0 & \alpha_{l}\\ -\overline{\alpha_{l}} & 0 \end{pmatrix}$ and $\alpha_{s}=0$, we obtain that at most one of $\alpha_{1},\alpha_{2},\cdots,\alpha_{m-2}$ is non-zero. Hence there exists some $s\in\{1,2,\cdots,m\}$ such that $[\rho_{k},\rho_{l}]=0$ for all $k,l\in\{1,2,\cdots,m\}\backslash\{s\}$. \par
    
    \noindent
    {\bf Case 
    \Rmnum{3}}: Consider the case where $\alpha_{l}\neq 0$ for all $l\in\{1,2,\cdots,m-2\}$. If $\alpha_{t}p_{s}+(1-p_{t})\alpha_{s}=\braket{0|\rho_{k_{s}}\rho_{k_{t}}|1}=0$ for all $s,t\in\{1,2,\cdots,m-2\}$, then $\frac{\alpha_{s}}{\alpha_{t}}=-\frac{p_{s}}{1-p_{t}}<0$ for all $s,t\in\{1,2,\cdots,m-2\}$. Thus, $1=\frac{\alpha_{1}}{\alpha_{2}}\frac{\alpha_{2}}{\alpha_{3}}\frac{\alpha_{3}}{\alpha_{1}}<0$, a contradiction. Consequently, there exist some $s,t\in\{1,2,\cdots,m-2\}$ such that $\braket{0|\rho_{k_{s}}\rho_{k_{t}}|1}\neq 0$. From \eqref{key condition of pp:nqubit}, we have 
    \begin{equation}
       0 = {\rm Tr}(\rho_{i}\rho_{k_{s}}\rho_{k_{t}}\rho_{j}B)=\braket{0|\rho_{k_{s}}\rho_{k_{t}}|1}\braket{1|B|0},
    \end{equation}
    where $B=\prod\limits_{l\in I}\rho_{k_{l}}$ and 
    \begin{equation}\label{eq:I=1,2,...,m-2 sm s,t}
        I=\{1,2,\cdots,m-2\}\backslash\{s,t\}.
    \end{equation}
    Hence $\braket{1|B|0}=0$. Combining this with
    \begin{equation}
        0 = {\rm Tr}(\rho_{i}\rho_{k_{s}}\rho_{j}\rho_{k_{t}}B)=\alpha_{s}(\overline{\alpha_{t}}\braket{0|B|0}+(1-p_{t})\braket{1|B|0}),
    \end{equation}
    we obtain $\braket{0|B|0}=0$. So $B\ket{0}=0$.\par
    We claim that there exist some $a,b\in I$ such that $\rho_{k_{a}}\rho_{k_{b}}=0$. Denote $R=\{r\in I|\det(\rho_{k_{r}})>0\}$. Then $I\backslash R=\{r\in I|\det(\rho_{k_{r}})=0\}$. If $|R|=|I|$, then  $\prod\limits_{r\in R}\rho_{k_{r}}\ket{0}=B\ket{0}=0$. Combining this with $\det(\prod\limits_{r\in R}\rho_{k_{r}})>0$, we have $\ket{0}=0$, a contradiction. Hence $|R|\leq|I|-1$. So we can see that $\prod\limits_{r\in R}\rho_{k_{r}}\prod\limits_{l\in I\backslash R}\rho_{k_{l}}\ket{0}=B\ket{0}=0$. Combining this with $\det(\prod\limits_{r\in R}\rho_{k_{r}})>0$, we have $\prod\limits_{l\in I\backslash R}\rho_{k_{l}}\ket{0}=0$. If $|R|=|I|-1$, then $\rho_{k_{l}}\ket{0}=0$ for $l\in I\backslash R$, contradicting the fact that $\alpha_{l}\neq 0$ for all $l\in I\backslash R$. Hence $|R|\leq|I|-2$. From $\prod\limits_{l\in I\backslash R}\rho_{k_{l}}\ket{0}=0$, $\alpha_{l}\neq 0$ and $\rho_{k_{l}}$ is pure state for all $l\in I\backslash R$, we obtain that there exist some $a,b\in I\backslash R\subseteq I$ such that $\rho_{k_{a}}\rho_{k_{b}}=0$. Thus, the claim is proved. \par
    Therefore, we can choose a basis $\ket{u},\ket{v}$ in which 
    \begin{equation}
    \rho_{k_{a}}=\begin{pmatrix} 1 & 0 \\ 0 & 0 \end{pmatrix},
    \rho_{k_{b}}=\begin{pmatrix} 0 & 0 \\ 0 & 1 \end{pmatrix},
    \rho_{k}=\begin{pmatrix} q_{k} & \beta_{k}\\ \overline{\beta_{k}} & 1-q_{k} \end{pmatrix}, \forall k\in \{1,2,\cdots,m\}\backslash\{k_{a},k_{b}\}
    \end{equation}
    where $q_{k}\in [0,1]$, $\beta_{k}\in \mathbb{C}$, $|\beta_{k}|^{2}\leq q_{k}(1-q_{k})$. If we know that at least one of $\beta_{k}(k\in \{1,2,\cdots,m\}\backslash\{k_{a},k_{b}\})$ is zero, then according to Case \Rmnum{1} and Case \Rmnum{2}, we can obtain there exists some $s\in\{1,2,\cdots,m\}$ such that $[\rho_{k},\rho_{l}]=0$ for all $k,l\in\{1,2,\cdots,m\}\backslash\{s\}$. Consequently, we only need consider the case where $\beta_{k}\neq 0$ for all $k\in \{1,2,\cdots,m\}\backslash\{k_{a},k_{b}\}$. \par 
    Denote $J=I\backslash\{a,b\}$. Due to $m\geq 7$, we have $J\neq\varnothing$.\par
    If $|J|=1$, we denote $\{c\}=J=I\backslash\{a,b\}$. Then 
    \begin{equation}
    \begin{split}
    \beta_{k_{c}}\braket{v|0}\ket{u}
    =\rho_{k_{a}}\rho_{k_{c}}\rho_{k_{b}}\ket{0}=B\ket{0}=0,\\
    \overline{\beta_{k_{c}}}\braket{u|0}\ket{v}
    =\rho_{k_{b}}\rho_{k_{c}}\rho_{k_{a}}\ket{0}=B\ket{0}=0.
    \end{split}
    \end{equation}
    Combining this with $\beta_{k_{c}}\neq 0$, we have $\braket{v|0}=\braket{u|0}=0$. Hence $\ket{0}=\braket{u|0}\ket{u}+\braket{v|0}\ket{v}=0$, a contradiction.\par
    If $|J|\geq 2$, then we claim that $\left(\prod\limits_{l\in J\backslash\{c\}}\rho_{k_{l}}\right)\ket{0}=0$ for any $c\in J$. Indeed, for any $c\in J$, we have 
    \begin{equation}
    \begin{split}
    \beta_{k_{c}}\braket{v|B_{1}|0}\ket{u}
    =\rho_{k_{a}}\rho_{k_{c}}\rho_{k_{b}}B_{1}\ket{0}=B\ket{0}=0,\\
    \overline{\beta_{k_{c}}}\braket{u|B_{1}|0}\ket{v}
    =\rho_{k_{b}}\rho_{k_{c}}\rho_{k_{a}}B_{1}\ket{0}=B\ket{0}=0,
    \end{split}
    \end{equation}
    where $B_{1}=\prod\limits_{l\in J\backslash\{c\}}\rho_{k_{l}}$. Combining this with $\beta_{k_{c}}\neq 0$, we obtain $\braket{u|B_{1}|0}=\braket{v|B_{1}|0}=0$. Hence $B_{1}\ket{0}=\braket{u|B_{1}|0}\ket{u}+\braket{v|B_{1}|0}\ket{v}=0$. Thus, the claim is proved. \par
    When $|J|=2$. Due to $\left(\prod\limits_{l\in J\backslash\{c\}}\rho_{k_{l}}\right)\ket{0}=0$ for any $c\in J$, we have $\begin{pmatrix} p_{r}\\ \overline{\alpha}_{r}\end{pmatrix}=\rho_{k_{r}}\ket{0}=B_{1}\ket{0}=0$ for any $r\in J$, contradicting the fact that $\alpha_{l}\neq 0$ for all $l\in\{1,2,\cdots,m-2\}$.\par
    When $|J|=3$. Due to $\left(\prod\limits_{l\in J\backslash\{c\}}\rho_{k_{l}}\right)\ket{0}=0$ for any $c\in J$, we have $\rho_{k_{c}}\rho_{k_{d}}\ket{0}=0$ for any two different number $c,d\in J$. Combining this with $\alpha_{l}\neq 0$ for all $l\in J$, we have $\rho_{k_{c}}\rho_{k_{d}}=0$ for any two different number $c,d\in J$. Hence $\frac{\alpha_{c}}{\alpha_{d}}=-\frac{p_{c}}{1-p_{d}}<0$ for any two different number $c,d\in J$. Denote $\{d_{1},d_{2},d_{3}\}=J$, then $1=\frac{\alpha_{d_{1}}}{\alpha_{d_{2}}}\frac{\alpha_{d_{2}}}{\alpha_{d_{3}}}\frac{\alpha_{d_{3}}}{\alpha_{d_{1}}}<0$, a contradiction.\par
    When $|J|\geq 4$. Take $c\in J$ and denote $I_{1}=J\backslash\{c\}$. Then $B_{1}\ket{0}=0$, where $B_{1}=\prod\limits_{l\in I_{1}}\rho_{k_{l}}$. Hence there exist some $a_{1},b_{1}\in I_{1}$ such that $\rho_{k_{a_{1}}}\rho_{k_{b_{1}}}=0$. So we can choose a basis $\ket{u_{1}},\ket{v_{1}}$ in which 
    \begin{equation}
    \rho_{k_{a_{1}}}=\begin{pmatrix} 1 & 0 \\ 0 & 0 \end{pmatrix},
    \rho_{k_{b_{1}}}=\begin{pmatrix} 0 & 0 \\ 0 & 1 \end{pmatrix},
    \rho_{k}=\begin{pmatrix} q_{k}^{(1)} & \beta_{k}^{(1)}\\ \overline{\beta_{k}^{(1)}} & 1-q_{k}^{(1)} \end{pmatrix}, \forall k\in \{1,2,\cdots,m\}\backslash\{k_{a_{1}},k_{b_{1}}\}
    \end{equation}
    where $q_{k}^{(1)}\in [0,1]$, $\beta_{k}^{(1)}\in \mathbb{C}$, $|\beta_{k}^{(1)}|^{2}\leq q_{k}^{(1)}(1-q_{k}^{(1)})$. If we know that at least one of $\beta_{k}^{(1)}(k\in \{1,2,\cdots,m\}\backslash\{k_{a_1},k_{b_1}\})$ is zero, then according to Case \Rmnum{1} and Case \Rmnum{2}, we can obtain there exists some $s\in\{1,2,\cdots,m\}$ such that $[\rho_{k},\rho_{l}]=0$ for all $k,l\in\{1,2,\cdots,m\}\backslash\{s\}$. Consequently, we only need consider the case where $\beta_{k}^{(1)}\neq 0$ for all $k\in \{1,2,\cdots,m\}\backslash\{k_{a_{1}},k_{b_{1}}\}$. Denote $J_{1}=I_{1}\backslash\{a_{1},b_{1}\}=J\backslash\{c,a_{1},b_{1}\}$. Then $|J_{1}|\geq 1$. Therefore, we can repeat the previous discussion of $J$ for $J_ {1}$. By repeating above process, we can complete the proof of (\rmnum{1})$\Rightarrow$(\rmnum{2}).\par
    
    (\rmnum{2})$\Rightarrow$(\rmnum{1}). Assume that $\{1,2,\cdots,n\}\backslash\{s\}=\{k_{1},k_{2},\cdots,k_{n-1}\}$, where $k_{1}<k_{2}<\cdots<k_{n-1}$. Due to $[\rho_{k},\rho_{l}]=0$ for all $k,l\in\{k_{1},k_{2},\cdots,k_{n-1}\}$, we can choose a basis $\ket{0},\ket{1}$ in which
    \begin{equation}
        \rho_{s}=\begin{pmatrix} p & \beta\\ \overline{\beta} & 1-p \end{pmatrix},
        \rho_{k_{l}}=\begin{pmatrix} a_{l} & 0 \\ 0 & 1-a_{l} \end{pmatrix}, l=1,2,\cdots,n-1
    \end{equation}
    where $a_{l},p\in [0,1]$, $\beta\in \mathbb{C}$ and $|\beta|^{2}\leq p(1-p)$. Due to cyclicity of the trace, we see that \eqref{key condition of pp:nqubit} is equivalent to ${\rm Tr}(\rho_{\sigma(1)}\rho_{\sigma(2)}\cdots\rho_{\sigma(n)})=0$ for all $\sigma\in S_{n}$. So we obtain $0={\rm Tr}(\rho_{k_{1}}\rho_{k_{2}}\cdots\rho_{k_{n-1}}\rho_{s})=p\prod\limits_{l=1}^{n-1}a_{l}+(1-p)\prod\limits_{l=1}^{n-1}(1-a_{l})$. Therefore, we obtain $p\prod\limits_{l=1}^{n-1}a_{l}=0$ and $(1-p)\prod\limits_{l=1}^{n-1}(1-a_{l})=0$. If $p\in (0,1)$, then $\prod\limits_{l=1}^{n-1}a_{l}=0$ and $\prod\limits_{l=1}^{n-1}(1-a_{l})=0$. So there exist $i,j\in\{1,2,\cdots,n-1\}(i\neq j)$ such that $a_{i}=0$ and $a_{j}=1$. This implies $\rho_{k_{i}}\rho_{k_{j}}=0$. If $p=0$ or $p=1$, then $\beta=0$ from $|\beta|^{2}\leq p(1-p)$. Clearly, there exists some $i\in\{1,2,\cdots,n-1\}$ such that $a_{i}=1$ or $a_{i}=0$. Hence $\rho_{k_{i}}\rho_{s}=0$.\qed

\section{The proof of Theorem \ref{nimaginarity}}\label{Appendix nimaginarity}
To prove Theorem \ref{nimaginarity}, we need to first prove the following lemma.
\begin{lemma}\label{567imaginarity}
    Let $\vec{\rho}=(\rho_{1},\cdots,\rho_{n})$ be an $n$-tuple of single qubit states where $\rho_{i}\in \mathcal{D}(\mathbb{C}^{2})$ and $n\in\{5,6,7\}$. Denote the Bargmann scenario $\mathcal{W}=\{(\tau(1),\tau(2),\cdots,\tau(n))|\tau\in S_{n}\enspace\mbox{and}\enspace\tau(1)=1\}$, where $S_{n}$ is the symmetric group on $\{1,2,\cdots,n\}$. Suppose that 
    \begin{equation}\label{keyeq-567imaginarity}
       \Delta_{w}(\vec{\rho})=0,\enspace\forall w\in\mathcal{W}.
    \end{equation}
    and ${\rm Tr}(\rho_{k}^{2})>\frac{1}{2}$ for all $k\in\{1,2,\cdots,n\}$. If the tuple $\vec{\rho}=(\rho_{1},\rho_{2},\cdots,\rho_{n})$ is imaginarity-free, then there exists some $i,j\in\{1,2,\cdots,n\}$ such that ${\rm Tr}(\rho_{i}\rho_{j})=0$.
\end{lemma}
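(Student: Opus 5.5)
The plan is to reduce to the coherence results already proved. Prop.~\ref{thm_5rho} applies for $n=5$, since $A_5\subseteq S_5$ makes its hypothesis weaker than \eqref{keyeq-567imaginarity}. Theorem~\ref{thm_nrho} applies for $n=6,7$. In both, the implication (\rmnum{2})$\Rightarrow$(\rmnum{1}) says: if some $n-1$ of the states pairwise commute, then there is a pair with $\rho_i\rho_j=0$, which is equivalent to ${\rm Tr}(\rho_i\rho_j)=0$. So it suffices to show that, under imaginarity-freeness and \eqref{keyeq-567imaginarity}, at least $n-1$ of the Bloch vectors are collinear.

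\textbf{Setup.} As in the proof of Prop.~\ref{4imaginarity}, I would first conjugate by a unitary so that every $\rho_k$ is real. Then $\boldsymbol{r}_k=(r_{k,1},0,r_{k,2})$, and I set $\boldsymbol{u}_k=(r_{k,1},r_{k,2})\in\mathbb{R}^2$. The condition ${\rm Tr}(\rho_k^2)>\frac12$ gives $\boldsymbol{u}_k\neq 0$. I would write $\rho_k=\frac12(\mathbb{I}+\boldsymbol{u}_k\cdot\boldsymbol{\tau})$ with $\boldsymbol{\tau}=(\sigma_x,\sigma_z)$, and $J=\sigma_x\sigma_z$, which is real and satisfies $J^2=-\mathbb{I}$. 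The key identity is
$$(\boldsymbol{a}\cdot\boldsymbol{\tau})(\boldsymbol{b}\cdot\boldsymbol{\tau})=(\boldsymbol{a}\cdot\boldsymbol{b})\mathbb{I}+\det(\boldsymbol{a},\boldsymbol{b})J,$$
so that $[\rho_a,\rho_b]=\frac12\det(\boldsymbol{u}_a,\boldsymbol{u}_b)J$. Since $J$ anticommutes with each $\boldsymbol{a}\cdot\boldsymbol{\tau}$, every product of these real matrices reduces to a combination of $\mathbb{I}$, $J$ and the $\boldsymbol{u}\cdot\boldsymbol{\tau}$. Traces then become explicit polynomials in the dot products and the determinants $\det(\boldsymbol{u}_a,\boldsymbol{u}_b)$, generalizing the $n=4$ formulas displayed in the proof of Prop.~\ref{4imaginarity}.

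\textbf{Main step.} If all $\boldsymbol{u}_k$ are collinear we are done, so I fix $a,b$ with $\det(\boldsymbol{u}_a,\boldsymbol{u}_b)\neq0$. Since the traces vanish for every ordering, subtracting the traces of two words that differ only by swapping adjacent $\rho_a,\rho_b$ gives
$$\det(\boldsymbol{u}_a,\boldsymbol{u}_b)\,{\rm Tr}(XJY)=0,\quad\text{hence}\quad {\rm Tr}(XJY)=0,$$
for every ordered product $X,Y$ of the remaining $n-2$ states, split arbitrarily. Adding, rather than subtracting, the two words together with this relation gives ${\rm Tr}(X\rho_a\rho_bY)=0$ with $\rho_a\rho_b$ replaced by $\frac12(\frac{1+\boldsymbol{u}_a\cdot\boldsymbol{u}_b}{2}\mathbb{I}+\ldots)$. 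This yields further relations that involve only the other $n-2\in\{3,4,5\}$ vectors.

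Both the relations with $J$ inserted and the $J$-free relations become equations among $\det(\boldsymbol{u}_c,\boldsymbol{u}_d)$ and $\boldsymbol{u}_c\cdot\boldsymbol{u}_d$, for $c,d\neq a,b$. I would then combine them, via the identity $\det(\boldsymbol{a},\boldsymbol{b})\det(\boldsymbol{c},\boldsymbol{d})=(\boldsymbol{a}\cdot\boldsymbol{c})(\boldsymbol{b}\cdot\boldsymbol{d})-(\boldsymbol{a}\cdot\boldsymbol{d})(\boldsymbol{b}\cdot\boldsymbol{c})$ and the nonvanishing of each $\boldsymbol{u}_k$, as in Prop.~\ref{4imaginarity}. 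The target is to force all but at most one vector into a common line, treating $n=5,6,7$ separately since there are few free vectors.

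\textbf{Expected obstacle.} The main difficulty is this last case analysis, which extracts collinearity of $n-1$ vectors from the polynomial system. The system grows with $n$, and the sign bookkeeping from $J^2=-\mathbb{I}$ is delicate. If the direct algebra gets out of hand, I would fall back on an inductive trick in the spirit of Cases~\Rmnum{1}--\Rmnum{2} of Appendix~\ref{Appendix thm_nrho}. The idea is to use the $J$-free relations to show that one state factors out of all vanishing traces, for instance through a nonzero scalar coefficient. The hypothesis then descends to $n-1$ states, ending at Prop.~\ref{4imaginarity} for $n=4$. Once $n-1$ collinear vectors are obtained, Prop.~\ref{thm_5rho} or Theorem~\ref{thm_nrho} finishes the proof.
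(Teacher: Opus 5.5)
\textbf{Verdict: there is a genuine gap.} Your reduction is sound. The $S_5$ hypothesis implies the $A_5$ one. Collinear $\boldsymbol{u}_k$ give pairwise commuting states, so $n-1$ collinear vectors let Proposition~\ref{thm_5rho} or Theorem~\ref{thm_nrho} finish. Your first swap also correctly gives $\operatorname{Tr}(JP)=0$ for every ordering $P$ of the $n-2$ states other than $\rho_a,\rho_b$; your $J$ is $-\mathrm{i}\sigma_y$, so this is the paper's commutator identity.

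The problem is that the proposal stops exactly where the lemma's content lies. ``I would then combine them'' is not an argument, and the tools you name would not supply one:
\begin{itemize}
\item The symmetrized relations do not involve only the other $n-2$ vectors, since $\rho_a\rho_b+\rho_b\rho_a=\frac12\bigl((1+\boldsymbol{u}_a\cdot\boldsymbol{u}_b)\mathbb{I}+(\boldsymbol{u}_a+\boldsymbol{u}_b)\cdot\boldsymbol{\tau}\bigr)$ still carries $\boldsymbol{u}_a,\boldsymbol{u}_b$.
\item The fallback of descending one state at a time has no evident mechanism. The factorization in Cases~\Rmnum{1}--\Rmnum{2} of Appendix~\ref{Appendix thm_nrho} relies on an orthogonal pair $\rho_i=\ket{0}\bra{0}$, $\rho_j=\ket{1}\bra{1}$ already being present, which is what you are trying to prove.
\item The only descent actually available here (used for $n\geq 8$) removes four states at once. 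That is precisely why $n=5,6,7$ need a separate argument.
\end{itemize}

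The missing idea is to swap a second time, inside the remaining states. This is the paper's identity $[\rho_{k_1},\rho_{l_1}][\rho_{k_2},\rho_{l_2}]=-\frac14\det(\boldsymbol{u}_{k_1},\boldsymbol{u}_{l_1})\det(\boldsymbol{u}_{k_2},\boldsymbol{u}_{l_2})\mathbb{I}_2$. Suppose $\det(\boldsymbol{u}_c,\boldsymbol{u}_d)\neq 0$ for some $c,d\notin\{a,b\}$. Then $\operatorname{Tr}(JXJY)=0$ for every split $X,Y$ of every ordering of the last $n-4$ states. Using $J^2=-\mathbb{I}$ and $J\rho_eJ=-\frac12(\mathbb{I}-\boldsymbol{u}_e\cdot\boldsymbol{\tau})$:
\begin{itemize}
\item For $n=5$ this forces $\det(\boldsymbol{u}_c,\boldsymbol{u}_d)=0$, so the case cannot occur.
\item For $n=6$ it gives $\operatorname{Tr}(\rho_e\rho_f)=0$ at once.
\item For $n=7$ it gives $\operatorname{Tr}(\rho_e\rho_f\rho_g)=\operatorname{Tr}(J\rho_eJ\rho_f\rho_g)=0$, i.e.\ $1\pm(\boldsymbol{u}_e\cdot\boldsymbol{u}_f+\boldsymbol{u}_e\cdot\boldsymbol{u}_g)+\boldsymbol{u}_f\cdot\boldsymbol{u}_g=0$. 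Hence $\boldsymbol{u}_f\cdot\boldsymbol{u}_g=-1$ and $\operatorname{Tr}(\rho_f\rho_g)=0$.
\end{itemize}
So in this case you obtain the orthogonal pair directly, not the collinearity you were aiming for.

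In the complementary case, no two disjoint pairs are both non-collinear. After relabeling, pick non-collinear pairs $\{a,b\}$ and $\{a,c\}$; this is possible because $\boldsymbol{u}_c\neq 0$ cannot be parallel to both $\boldsymbol{u}_a$ and $\boldsymbol{u}_b$. Every other $\boldsymbol{u}_d$ is then parallel to both $\boldsymbol{u}_b$ and $\boldsymbol{u}_c$, so all vectors except $\boldsymbol{u}_a$ are collinear. Only here do Proposition~\ref{thm_5rho} and Theorem~\ref{thm_nrho} enter.
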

\begin{proof}
    Write these states in Bloch form, 
    \begin{equation}
    \rho_{k}=\frac{\mathbb{I}_{2}+\boldsymbol{r}_{k}\cdot\boldsymbol{\sigma}}{2}
    \end{equation}
    with $|\boldsymbol{r}_{k}|\leq 1$ ($k=1,2,\cdots,n$). For all $k\in\{1,2,\cdots,n\}$, we have $\boldsymbol{r}_{k}\neq 0$ because ${\rm Tr}(\rho_{k}^{2})\neq\frac{1}{2}$.\par
    Due to $\vec{\rho}=(\rho_{1},\rho_{2},\cdots,\rho_{n})$ is imaginarity-free, we can assume without loss of generality that $\rho_{k}$ is real for all $k\in\{1,2,\cdots,n\}$. In this case, ${\rm Tr}(\rho_{k}\sigma_{y})=0$ for all $k\in\{1,2,\cdots,n\}$. Consequently, there exist $r_{k,1},r_{k,2}\in\mathbb{R}$ such that $\boldsymbol{r}_{k}=(r_{k,1},0,r_{k,2})$. Denote $\boldsymbol{u}_{k}=(r_{k,1},r_{k,2})^{\mathsf{T}}$ for all $k\in\{1,2,\cdots,n\}$. Clearly, $\boldsymbol{u}_{k}\neq 0$ because $\boldsymbol{r}_{k}\neq 0$. For any $k,l\in\{1,2,\cdots,n\}$, we have
    \begin{equation}
        \boldsymbol{r}_{k}\times\boldsymbol{r}_{l}=(0,r_{l,1}r_{k,2}-r_{k,1}r_{l,2},0)^{\mathsf{T}}=(0,-\det(\boldsymbol{u}_{k},\boldsymbol{u}_{l}),0)^{\mathsf{T}}.
    \end{equation}
    Combining this with $(\boldsymbol{a}\cdot\boldsymbol{\sigma})(\boldsymbol{b}\cdot\boldsymbol{\sigma})=(\boldsymbol{a}\cdot\boldsymbol{b})\mathbb{I}_{2}+i(\boldsymbol{a}\times\boldsymbol{b})\cdot\boldsymbol{\sigma}$, we obtain
    \begin{equation}\label{1commutator}
        [\rho_{k},\rho_{l}]=\frac{1}{4}[\boldsymbol{r}_{k}\cdot\boldsymbol{\sigma},\boldsymbol{r}_{l}\cdot\boldsymbol{\sigma}]=\frac{\mathrm{i}}{4}(\boldsymbol{r}_{k}\times\boldsymbol{r}_{l}-\boldsymbol{r}_{l}\times\boldsymbol{r}_{k})\cdot\boldsymbol{\sigma}=\frac{\mathrm{i}}{2}(\boldsymbol{r}_{k}\times\boldsymbol{r}_{l})\cdot\boldsymbol{\sigma}=-\frac{\mathrm{i}}{2}\det(\boldsymbol{u}_{k},\boldsymbol{u}_{l})\sigma_{y}.
    \end{equation}
    Therefore, for any $k_{1},l_{1},k_{2},l_{2}\in\{1,2,\cdots,n\}$, we have
    \begin{equation}\label{2commutator567}
    [\rho_{k_{1}},\rho_{l_{1}}][\rho_{k_{2}},\rho_{l_{2}}] =-\frac{1}{4}\det(\boldsymbol{u}_{k_{1}},\boldsymbol{u}_{l_{1}})\det(\boldsymbol{u}_{k_{2}},\boldsymbol{u}_{l_{2}})\mathbb{I}_{2}.
    \end{equation}
    
    First, let us prove Lemma \ref{567imaginarity} for the case $n=5$. Due to cyclicity of the trace, we see that \eqref{keyeq-567imaginarity} is equivalent to ${\rm Tr}(\rho_{\tau(1)}\rho_{\tau(2)}\rho_{\tau(3)}\rho_{\tau(4)}\rho_{\tau(5)})=0$ for all $\tau\in S_{5}$. Combining this with \eqref{2commutator567}, we obtain 
    \begin{equation}
    \det(\boldsymbol{u}_{\tau(1)},\boldsymbol{u}_{\tau(2)})
    \det(\boldsymbol{u}_{\tau(3)},\boldsymbol{u}_{\tau(4)})
    =-4{\rm Tr}([\rho_{\tau(1)},\rho_{\tau(2)}][\rho_{\tau(3)},\rho_{\tau(4)}]\rho_{\tau(5)})=0
    \end{equation}
    for all $\tau\in S_{5}$. If there exist some $j,k,l\in\{1,2,3,4,5\}$ such that $\boldsymbol{u}_{j},\boldsymbol{u}_{k},\boldsymbol{u}_{l}$ are pairwise non-collinear, then we have 
    \begin{equation}
        \det(\boldsymbol{u}_{j},\boldsymbol{u}_{k})
    \det(\boldsymbol{u}_{l},\boldsymbol{u}_{m})\neq 0 \enspace\mbox{or}\enspace \det(\boldsymbol{u}_{j},\boldsymbol{u}_{m})
    \det(\boldsymbol{u}_{k},\boldsymbol{u}_{l})\neq 0
    \end{equation}
    for any $m\in\{1,2,3,4,5\}\backslash\{j,k,l\}$, a contradiction. From this and $\boldsymbol{u}_{k}\neq 0$ for all $k\in\{1,2,3,4,5\}$, we obtain that at least four vectors in the sequence $\{\boldsymbol{u}_{k}\}_{k=1}^{5}$ are collinear, which is equivalent to at least four vectors in the sequence $\{\boldsymbol{r}_{k}\}_{k=1}^{5}$ are collinear. Therefore, there exists some $s\in\{1,2,3,4,5\}$ such that $[\rho_{k},\rho_{l}]=0$ for all $k,l\in\{1,2,3,4,5\}\backslash\{s\}$. From Proposition \ref{thm_5rho}, there exists some $i,j\in\{1,2,3,4,5\}$ such that ${\rm Tr}(\rho_{i}\rho_{j})=0$.\par

    Next, let us prove Lemma \ref{567imaginarity} for the case $n=6,7$. We define $G=\{\{s,t\}|\det(\boldsymbol{u}_{s},\boldsymbol{u}_{t})\neq 0\}$. Clearly, we have $G=\varnothing$ or $|G|\geq 2$. If $G=\varnothing$, then according to Theorem \ref{thm_nrho}, there exists some $i,j\in\{1,2,\cdots,n\}$ such that ${\rm Tr}(\rho_{i}\rho_{j})=0$. Thus, we assume that $|G|\geq 2$ below.\par
    \noindent
    {\bf Case \Rmnum{1}}: Consider the case that there exist two different members $E_{1},E_{2}\in G$ satisfy $E_{1}\cap E_{2}=\varnothing$. \par
    \noindent
    (\rmnum{1}) If $n=6$, then there exist some $\tau\in S_{6}$ such that $E_{1}=\{\tau(1),\tau(2)\}$ and $E_{2}=\{\tau(3),\tau(4)\}$. Due to cyclicity of the trace, we see that \eqref{keyeq-567imaginarity} is equivalent to ${\rm Tr}(\rho_{\tau(1)}\rho_{\tau(2)}\rho_{\tau(3)}\rho_{\tau(4)}\rho_{\tau(5)}\rho_{\tau(6)})=0$ for all $\tau\in S_{6}$. With help of this and \eqref{2commutator567}, we obtain 
    \begin{equation*}
    \det(\boldsymbol{u}_{\tau(1)},\boldsymbol{u}_{\tau(2)})
    \det(\boldsymbol{u}_{\tau(3)},\boldsymbol{u}_{\tau(4)})
    {\rm Tr}(\rho_{\tau(5)}\rho_{\tau(6)})=-4{\rm Tr}([\rho_{\tau(1)},\rho_{\tau(2)}][\rho_{\tau(3)},\rho_{\tau(4)}]\rho_{\tau(5)}\rho_{\tau(6)})=0.
    \end{equation*} 
    Combining this with $\{\tau(1),\tau(2)\}\in G$ and $\{\tau(3),\tau(4)\}\in G$, we have ${\rm Tr}(\rho_{\tau(5)}\rho_{\tau(6)})=0$. \par
    \noindent
    (\rmnum{2}) If $n=7$, then there exist some $\tau\in S_{7}$ such that $E_{1}=\{\tau(1),\tau(2)\}$ and $E_{2}=\{\tau(3),\tau(4)\}$. According to \eqref{1commutator}, for any $k_{1},l_{1},k_{2},l_{2},m\in\{1,2,\cdots,7\}$, we have
    \begin{equation}\label{2commutator7}
    \begin{split}
    [\rho_{k_{1}},\rho_{l_{1}}][\rho_{k_{2}},\rho_{l_{2}}] & =-\frac{1}{4}\det(\boldsymbol{u}_{k_{1}},\boldsymbol{u}_{l_{1}})\det(\boldsymbol{u}_{k_{2}},\boldsymbol{u}_{l_{2}})\mathbb{I}_{2},\\
    [\rho_{k_{1}},\rho_{l_{1}}]\rho_{m}[\rho_{k_{2}},\rho_{l_{2}}] & =-\frac{1}{4}\det(\boldsymbol{u}_{k_{1}},\boldsymbol{u}_{l_{1}})\det(\boldsymbol{u}_{k_{2}},\boldsymbol{u}_{l_{2}})\sigma_{y}\rho_{m}\sigma_{y}.    
    \end{split}
    \end{equation}
    Due to cyclicity of the trace, we see that \eqref{keyeq-567imaginarity} is equivalent to ${\rm Tr}(\rho_{\tau(1)}\rho_{\tau(2)}\cdots\rho_{\tau(7)})=0$ for all $\tau\in S_{7}$. Thus, we have 
    \begin{equation}
    \begin{split}
    -4{\rm Tr}([\rho_{\tau(1)},\rho_{\tau(2)}][\rho_{\tau(3)},\rho_{\tau(4)}]\rho_{\tau(5)}\rho_{\tau(6)}\rho_{\tau(7)})=0,\\
    -4{\rm Tr}([\rho_{\tau(1)},\rho_{\tau(2)}]\rho_{\tau(5)}[\rho_{\tau(3)},\rho_{\tau(4)}]\rho_{\tau(6)}\rho_{\tau(7)})=0.
    \end{split}
    \end{equation}
    With help of this and \eqref{2commutator7}, we obtain 
    \begin{equation}\label{keydet7}
    \begin{split}
    \det(\boldsymbol{u}_{\tau(1)},\boldsymbol{u}_{\tau(2)})
    \det(\boldsymbol{u}_{\tau(3)},\boldsymbol{u}_{\tau(4)})
    {\rm Tr}(\rho_{\tau(5)}\rho_{\tau(6)}\rho_{\tau(7)}) & =0,\\
    \det(\boldsymbol{u}_{\tau(1)},\boldsymbol{u}_{\tau(2)})
    \det(\boldsymbol{u}_{\tau(3)},\boldsymbol{u}_{\tau(4)})
    {\rm Tr}(\sigma_{y}\rho_{\tau(5)}\sigma_{y}\rho_{\tau(6)}\rho_{\tau(7)}) & =0.  
    \end{split}
    \end{equation}
    Combining this with $\{\tau(1),\tau(2)\}\in G$ and $\{\tau(3),\tau(4)\}\in G$, we have
    \begin{equation}\label{keyTr7}
    \begin{cases}
    {\rm Tr}(\rho_{\tau(5)}\rho_{\tau(6)}\rho_{\tau(7)})=0, \\
    {\rm Tr}(\sigma_{y}\rho_{\tau(5)}\sigma_{y}\rho_{\tau(6)}\rho_{\tau(7)})=0.
    \end{cases}
    \end{equation}
    Due to $\boldsymbol{r}_{k}=(r_{k,1},0,r_{k,2})$, we have $\sigma_{y}\rho_{k}\sigma_{y}=\frac{1}{2}(\mathbb{I}_{2}-\boldsymbol{r}_{k}\cdot\boldsymbol{\sigma})$. Combining this with \eqref{keyTr7}, we obtain
    \begin{equation*}
    \begin{cases}
    0= 1+\boldsymbol{r}_{\tau(5)}\cdot\boldsymbol{r}_{\tau(6)}+\boldsymbol{r}_{\tau(5)}\cdot\boldsymbol{r}_{\tau(7)}+\boldsymbol{r}_{\tau(6)}\cdot\boldsymbol{r}_{\tau(7)}\\
    0= 1-\boldsymbol{r}_{\tau(5)}\cdot\boldsymbol{r}_{\tau(6)}-\boldsymbol{r}_{\tau(5)}\cdot\boldsymbol{r}_{\tau(7)}+\boldsymbol{r}_{\tau(6)}\cdot\boldsymbol{r}_{\tau(7)}.
    \end{cases}
    \end{equation*}
    Therefore, we have $\boldsymbol{r}_{\tau(6)}\cdot\boldsymbol{r}_{\tau(7)}=-1$. Hence ${\rm Tr}(\rho_{\tau(6)}\rho_{\tau(7)})=\frac{1}{2}(1+\boldsymbol{r}_{\tau(6)}\cdot\boldsymbol{r}_{\tau(7)})=0$. \par
    
    \noindent
    {\bf Case \Rmnum{2}}: Consider the case that $E_{1}\cap E_{2}\neq\varnothing$ for all  $E_{1},E_{2}\in G$. Combining this with $|G|\geq 2$, we can take $\{b,a\},\{a,c\}\in G$. For any $d\in\{1,2,\cdots,n\}\backslash\{a,b,c\}$, according to $E_{1}\cap E_{2}\neq\varnothing$ for all $E_{1},E_{2}\in G$, we have $\{b,d\}\not\in G$ and $\{c,d\}\not\in G$ because $\{b,d\}\cap\{a,c\}=\varnothing$ and $\{c,d\}\cap\{b,a\}=\varnothing$. Combining this with $\boldsymbol{u}_{k}\neq 0$ for all $k\in\{1,2,\cdots,n\}$, we obtain $\boldsymbol{u}_{d},\boldsymbol{u}_{b},\boldsymbol{u}_{c}$ are collinear. Consequently, at least $n-1$ vectors in the sequence $\{\boldsymbol{u}_{k}\}_{k=1}^{n}$ are collinear, which is equivalent to at least $n-1$ vectors in the sequence $\{\boldsymbol{r}_{k}\}_{k=1}^{n}$ are collinear. Therefore, there exists some $s\in\{1,2,\cdots,n\}$ such that $[\rho_{k},\rho_{l}]=0$ for all $k,l\in\{1,2,\cdots,n\}\backslash\{s\}$. From Theorem \ref{thm_nrho}, there exists some $i,j\in\{1,2,\cdots,n\}$ such that ${\rm Tr}(\rho_{i}\rho_{j})=0$.
\end{proof}

    Now, we can prove Theorem \ref{nimaginarity}.

{\bf The proof of Theorem \ref{nimaginarity}}: Write these states in Bloch form, 
    \begin{equation}
    \rho_{k}=\frac{\mathbb{I}_{2}+\boldsymbol{r}_{k}\cdot\boldsymbol{\sigma}}{2}
    \end{equation}
    with $|\boldsymbol{r}_{k}|\leq 1$ ($k=1,2,\cdots,n$). For all $k\in\{1,2,\cdots,n\}$, we have $\boldsymbol{r}_{k}\neq 0$ because ${\rm Tr}(\rho_{k}^{2})\neq\frac{1}{2}$.\par
    \noindent
    (\rmnum{1})$\Rightarrow$(\rmnum{2}). Note that ${\rm Tr}(\rho_{i}\rho_{j})=0$ is equivalent to $\rho_{i}\rho_{j}=0$. From Propositions \ref{pp:rho1rho2rho3rho4}, \ref{thm_5rho} and \ref{thm_nrho}, there exists some $s\in\{1,2,\cdots,n\}$ such that $[\rho_{k},\rho_{l}]=0$ for all $k,l\in\{1,2,\cdots,n\}\backslash\{s\}$. So at least $n-1$ vectors in the sequence $\{\boldsymbol{r}_{k}\}_{k=1}^{n}$ are collinear. Denote $B=(\boldsymbol{r}_{1},\boldsymbol{r}_{2},\cdots,\boldsymbol{r}_{n})$. Then we have ${\rm rank}(G_{\boldsymbol{r}_{\vec{\rho}}})={\rm rank}(B^{\mathsf{T}}B)={\rm rank}(B)\leq 2$ where $G_{\boldsymbol{r}_{\vec{\rho}}}=(\braket{\boldsymbol{r}_{k},\boldsymbol{r}_{l}})_{n\times n}=B^TB$. Therefore, the tuple $\vec{\rho}=(\rho_{1},\rho_{2},\cdots,\rho_{n})$ is imaginarity-free from Lemma \ref{th:imaginarity}.\par

    \noindent
    (\rmnum{2})$\Rightarrow$(\rmnum{1}). From Proposition \ref{4imaginarity} and Lemma \ref{567imaginarity}, we have proven the claim (\rmnum{2})$\Rightarrow$(\rmnum{1}) for $n=4,5,6,7$. We assume $n\geq 8$ below. To apply the induction on $n$, we assume that the claim is true for any $m$-tuple of single qubit states when $m\in\{4,5,\cdots,n-1\}$. We need prove the claim for $n$.\par
    
    Due to $\vec{\rho}=(\rho_{1},\rho_{2},\cdots,\rho_{n})$ is imaginarity-free, we can assume without loss of generality that $\rho_{k}$ is real for all $k\in\{1,2,\cdots,n\}$. In this case, ${\rm Tr}(\rho_{k}\sigma_{y})=0$ for all $k\in\{1,2,\cdots,n\}$. Consequently, there exist $r_{k,1},r_{k,2}\in\mathbb{R}$ such that $\boldsymbol{r}_{k}=(r_{k,1},0,r_{k,2})$. Denote $\boldsymbol{u}_{k}=(r_{k,1},r_{k,2})^{\mathsf{T}}$ for all $k\in\{1,2,\cdots,n\}$. Clearly, $\boldsymbol{u}_{k}\neq 0$ because $\boldsymbol{r}_{k}\neq 0$. For any $k,l\in\{1,2,\cdots,n\}$, we have
    \begin{equation}
        \boldsymbol{r}_{k}\times\boldsymbol{r}_{l}=(0,r_{l,1}r_{k,2}-r_{k,1}r_{l,2},0)^{\mathsf{T}}=(0,-\det(\boldsymbol{u}_{k},\boldsymbol{u}_{l}),0)^{\mathsf{T}}.
    \end{equation}
    Combining this with $(\boldsymbol{a}\cdot\boldsymbol{\sigma})(\boldsymbol{b}\cdot\boldsymbol{\sigma})=(\boldsymbol{a}\cdot\boldsymbol{b})\mathbb{I}_{2}+i(\boldsymbol{a}\times\boldsymbol{b})\cdot\boldsymbol{\sigma}$, we obtain
    \begin{equation}
        [\rho_{k},\rho_{l}]=\frac{1}{4}[\boldsymbol{r}_{k}\cdot\boldsymbol{\sigma},\boldsymbol{r}_{l}\cdot\boldsymbol{\sigma}]=\frac{\mathrm{i}}{4}(\boldsymbol{r}_{k}\times\boldsymbol{r}_{l}-\boldsymbol{r}_{l}\times\boldsymbol{r}_{k})\cdot\boldsymbol{\sigma}=\frac{\mathrm{i}}{2}(\boldsymbol{r}_{k}\times\boldsymbol{r}_{l})\cdot\boldsymbol{\sigma}=-\frac{\mathrm{i}}{2}\det(\boldsymbol{u}_{k},\boldsymbol{u}_{l})\sigma_{y}.
    \end{equation}
    Therefore, for any $k_{1},l_{1},k_{2},l_{2}\in\{1,2,\cdots,n\}$, we have
    \begin{equation}\label{2commutatorn}
    [\rho_{k_{1}},\rho_{l_{1}}][\rho_{k_{2}},\rho_{l_{2}}] =-\frac{1}{4}\det(\boldsymbol{u}_{k_{1}},\boldsymbol{u}_{l_{1}})\det(\boldsymbol{u}_{k_{2}},\boldsymbol{u}_{l_{2}})\mathbb{I}_{2}.
    \end{equation}
    Due to cyclicity of the trace, we see that \eqref{keyeq-nimaginarity} is equivalent to ${\rm Tr}(\rho_{\tau(1)}\rho_{\tau(2)}\cdots\rho_{\tau(n)})=0$ for all $\tau\in S_{n}$. Thus, for any $\tau\in S_{n}$, we have 
    \begin{equation}
        -4{\rm Tr}([\rho_{\tau(1)},\rho_{\tau(2)}][\rho_{\tau(3)},\rho_{\tau(4)}]\rho_{\tau(5)}\cdots\rho_{\tau(n)})=0.
    \end{equation}
    Combining this with \eqref{2commutatorn}, we obtain 
    \begin{equation}\label{keydetn}
    \det(\boldsymbol{u}_{\tau(1)},\boldsymbol{u}_{\tau(2)})
    \det(\boldsymbol{u}_{\tau(3)},\boldsymbol{u}_{\tau(4)})
    {\rm Tr}(\rho_{\tau(5)}\cdots\rho_{\tau(n)})=0.
    \end{equation}
    We define $G=\{\{s,t\}|\det(\boldsymbol{u}_{s},\boldsymbol{u}_{t})\neq 0\}$. Clearly, we have $G=\varnothing$ or $|G|\geq 2$. From Theorem \ref{thm_nrho}, there exists some $i,j\in\{1,2,\cdots,n\}$ such that ${\rm Tr}(\rho_{i}\rho_{j})=0$ when $G=\varnothing$. Thus, we assume that $|G|\geq 2$ below.\par

    \noindent
    {\bf Case \Rmnum{1}}: Consider the case that there exist two different members $E_{1},E_{2}\in G$ satisfy $E_{1}\cap E_{2}=\varnothing$. Assume without loss of generality that $E_{1}=\{1,2\}$ and $E_{2}=\{3,4\}$. Combining this with \eqref{keydetn}, we have 
    \begin{equation}
    {\rm Tr}(\rho_{\tau(5)}\rho_{\tau(6)}\cdots\rho_{\tau(n)})=0
    \end{equation}
    for all $\tau\in S_{\{5,6,\cdots,n\}}$, where $S_{\{5,6,\cdots,n\}}$ is the symmetric group on $\{5,6,\cdots,n\}$. Clearly, $|\{5,6,\cdots,n\}|=n-4$. Applying induction hypothesis, there exists some $i,j\in\{5,6,\cdots,n\}\subseteq\{1,2,\cdots,n\}$ such that ${\rm Tr}(\rho_{i}\rho_{j})=0$. \par

    \noindent
    {\bf Case \Rmnum{2}}: Consider the case that $E_{1}\cap E_{2}\neq\varnothing$ for all  $E_{1},E_{2}\in G$. Combining this with $|G|\geq 2$, we can take $\{b,a\},\{a,c\}\in G$. For any $d\in\{1,2,\cdots,n\}\backslash\{a,b,c\}$, according to $E_{1}\cap E_{2}\neq\varnothing$ for all $E_{1},E_{2}\in G$, we have $\{b,d\}\not\in G$ and $\{c,d\}\not\in G$ because $\{b,d\}\cap\{a,c\}=\varnothing$ and $\{c,d\}\cap\{b,a\}=\varnothing$. Combining this with $\boldsymbol{u}_{k}\neq 0$ for all $k\in\{1,2,\cdots,n\}$, we obtain $\boldsymbol{u}_{d},\boldsymbol{u}_{b},\boldsymbol{u}_{c}$ are collinear. Consequently, at least $n-1$ vectors in the sequence $\{\boldsymbol{u}_{k}\}_{k=1}^{n}$ are collinear, which is equivalent to at least $n-1$ vectors in the sequence $\{\boldsymbol{r}_{k}\}_{k=1}^{n}$ are collinear. Therefore, there exists some $s\in\{1,2,\cdots,n\}$ such that $[\rho_{k},\rho_{l}]=0$ for all $k,l\in\{1,2,\cdots,n\}\backslash\{s\}$. From Theorem \ref{thm_nrho}, we can finish the proof. \qed

\end{appendices}

\end{document}